\documentclass[letterpaper, 10 pt, conference]{ieeeconf}  

\IEEEoverridecommandlockouts                              
\usepackage{graphicx}
\usepackage{color}
\usepackage{hyperref}
\usepackage{amsmath,amssymb,amsfonts,amsthm}
\usepackage{mathtools}
\usepackage{braket}
\usepackage{caption,subcaption}
\usepackage{url}
\usepackage{multirow}
\usepackage{algorithmic}
\usepackage[ruled,vlined]{algorithm2e}
\usepackage{tikz,tikz-cd}
\usetikzlibrary{positioning,arrows.meta,fit,calc}

\DeclareMathOperator{\rank}{rank}
\DeclareMathOperator{\im}{im}
\DeclareMathOperator{\eig}{eig}
\newcommand{\PAR}[1]{{\left( #1 \right)}}

\newcommand{\PPPAR}[1]{{\left[ #1 \right]}}
\newcommand{\R}{\mathbb{R}} 
\newcommand{\bbS}{\mathbb{S}} 
\newcommand{\PSD}{\mathbb{S}_{+}}
\newcommand{\PD}{\mathbb{S}_{++}} 
\newcommand{\ep}{{\varepsilon}}
\newcommand{\diag}[1]{{\rm diag}\PAR{#1}}
\newcommand{\norm}[1]{{\left\| {#1} \right\|}}

\newcommand{\mat}[1]{\begin{bmatrix}#1\end{bmatrix}}

\newcommand{\PROB}[1]{$\mathbf{P_{#1}}$}
\newcommand{\rev}[1]{{#1}}

\newtheorem{theorem}{Theorem}
\newtheorem{proposition}{Proposition}
\newtheorem{assumption}{Assumption}
\newtheorem{corollary}{Corollary}
\newtheorem{lemma}{Lemma}
\newtheorem{example}{Example}
\newtheorem{remark}{Remark}

\usepackage[backend=biber,style=ieee,url=false,doi=false,eprint=false, date=year]{biblatex}
\AtEveryBibitem{\clearfield{pages}}

\title{\LARGE \bf
Bilevel MPC for Linear Systems: A Tractable Reduction and Continuous Connection to Hierarchical MPC
}

\author{Ryuta Moriyasu$^{1,2}$, Carmen Amo Alonso$^{1}$ and Marco Pavone$^{1,3}$
\thanks{This manuscript has been accepted to the IEEE Conference on Decision and Control (CDC) 2026 for publication.
This work was supported by Toyota Motor Corporation}
\thanks{{\footnotesize$^{1}$ Department of Aeronautics \& Astronautics, Stanford University. \phantom{abcd}
        {\tt\footnotesize \phantom{ab} \{moriyasu,camoalon,pavone\}@stanford.edu}}}%
\thanks{{\footnotesize$^{2}$ Toyota Motor North America.
        {\tt\footnotesize ryuta.moriyasu@toyota.com}}}%
\thanks{{\footnotesize$^{3}$ NVIDIA Research.}}%
}

\begin{document}

\maketitle
\thispagestyle{empty}
\pagestyle{empty}

\begin{abstract} Model predictive control (MPC) has been widely used in many fields, often in hierarchical architectures that combine controllers and decision-making layers at different levels. 
However, when such architectures are cast as bilevel optimization problems, standard KKT-based reformulations often introduce nonconvex and potentially nonsmooth structures that are undesirable for real-time verifiable control.
In this paper, we study a bilevel MPC architecture composed of (i) an upper layer that selects the reference sequence and (ii) a lower-level linear MPC that tracks such reference sequence. 
We propose a smooth single-level reduction that does not degrade performance under a verifiable block-matrix nonsingularity condition.
In addition, when the problem is convex, its solution is unique and equivalent to a corresponding centralized MPC, enabling the inheritance of closed-loop properties. 
We further show that bilevel MPC is a natural extension of standard hierarchical MPC, and introduce an interpolation framework that continuously connects the two via move-blocking. This framework reveals optimal-value ordering among the resulting formulations and provides inexpensive \emph{a posteriori} degradation certificates, thereby enabling a principled performance–computational efficiency trade-off.
Code for this project is available at \url{https://github.com/StanfordASL/Reduced_BMPC}.

\end{abstract}

\section{Introduction}

\subsection{Background}

    Hierarchical control architectures, where upper-level decisions (e.g., economics, safety, system-wide performance) are coupled with the optimal response of a lower-level controller, are widespread in process control~\cite{QinBadgwell2003,Scattolini2009}, automotive systems~\cite{falcone2008hierarchical,Garone2017}, power/microgrids~\cite{Guerrero2011_TIE_Microgrids}, buildings/HVAC~\cite{Serale2018_Energies_BuildingsMPC}, spacecraft~\cite{KolmanovskyBilevelCG2024}, traffic networks~\cite{Zhou2017_TCST_TrafficHMPC}, and others.
    One reason is that separating design by layer, where each has distinct functional-safety/responsibility requirements, is common practice~\cite{QinBadgwell2003,Benveniste2018Contracts}.
    Another reason is that confining learning or policy tuning to the upper layer is desirable~\cite{Hewing2020} as it preserves the structure of the lower-level controller and thereby makes safety, verification, and deployment more manageable.
    Hence, hierarchical paradigms are naturally preferred in many fields~\cite{Scattolini2009}.
    
    In the context of model predictive control (MPC), hierarchical MPC (HMPC)~\cite{Scattolini2009,Marchetti2014RTO_MPC} is widespread. 
    In this setting, the upper level selects steady-state targets/schedules and the lower level tracks them via dynamic MPC.\footnotemark
    Hence, HMPC is effective when the upper-level objective depends primarily on steady-state quantities and transient dynamics have limited influence~\cite{QinBadgwell2003,RawlingsMayneDiehl2017}.
    Yet, when transients affect cost or constraint usage, ignoring dynamics at the upper layer can incur non-negligible performance loss~\cite{Angeli2012}.
    To avoid such losses while retaining the hierarchical architecture, approaches for \emph{bilevel MPC} have been proposed~\cite{Avraamidou2019,8431884}.
    
    Bilevel MPC is formulated as a \emph{bilevel optimization} problem in which the upper-level optimization is constrained by the solution of a lower-level optimization problem~\cite{Colson2007,Dempe2002,Bard1998}.
    A standard approach, especially when the lower-level problem is convex and regular, is to replace the lower-level problem with its KKT conditions to obtain a single-level \emph{mathematical program with complementarity constraints} (MPCC) (see, e.g.,~\cite{luo1996mathematical}).
    In control, this approach is used in bilevel command governors~\cite{KolmanovskyBilevelCG2024}, RTO--MPC embedding of KKT systems~\cite{Marchetti2014RTO_MPC}, differentiable MPC layers~\cite{Amos2018}, and safety-oriented hierarchies~\cite{kogel2025safe}, among others.
    \footnotetext{The term ``hierarchical MPC'' is sometimes used with different meanings in the literature.
    Here, HMPC refers to the two-layer architecture with an upper-level steady-state target optimizer (or scheduler)
    and a lower-level tracking MPC, as formalized in Section~IV.}
    
    Despite the widespread interest in bilevel MPC across different applications, current approaches still face significant challenges.
    In particular, MPCC formulations are inherently nonconvex due to complementarity and possible constraint qualification (CQ) violations (LICQ/MFCQ)~\cite{ScheelScholtes2000_MP,luo1996mathematical,ye2006constraint}, which can lead to local/multiple solutions and numerical instabilities.
    A common workaround is to encode complementarity with binaries, yielding mixed-integer formulations. 
    However, such reformulations lack scalability and cause policy discontinuities, which often leads to nonuniqueness and switching~\cite{Kleinert2021Survey,Kleinert2020BigM,MOSEKCookbook2023}.
    These drawbacks are at odds with control applications where speed, continuity, and uniqueness are critical.
    
\subsection{Aims and Contributions}

    Motivated by the issues with MPCC, we seek a reduced bilevel MPC formulation that is solvable as a convex program to ensure the uniqueness of the solution while maintaining optimality.
    Particularly, we assume a bilevel MPC architecture where an upper layer selects a reference sequence that is given to the lower-level linear tracking MPC formulated as a strongly convex quadratic program (QP).
    Focusing on settings where the upper layer does not directly depend on the reference sequence, we provide a tractable alternative to MPCC-based bilevel MPC.
    Moreover, to make the resulting bilevel design easily deployable and comparable to standard HMPC baselines, a move-blocking~\cite{cagienard2007move} technique, which divides the horizon into some blocks and fixes the signal in each block, is introduced to connect bilevel MPC to HMPC and enable a principled trade-off between computational efficiency and performance.  
    
    \begin{itemize}[\labelindent=0pt]
      \item \textbf{Tractable reduction (QP-solvable bilevel consistency).}
      We propose a smooth, complementarity-free single-level formulation by lifting the lower-level inequalities to the upper level and enforcing only stationarity.
      Under a verifiable block-matrix nonsingularity condition, it is exact: it recovers the same optimal value and realized input trajectory as the original bilevel MPC, independently of the active-set pattern.
      Moreover, under a standard convexity assumption, the realized input optimizer is unique and coincides with that of a corresponding centralized MPC.
      \item \textbf{HMPC--BMPC connection (performance--complexity trade-off with certificates).}
      We interpret standard HMPC as a one-block reference policy and connect it to bilevel MPC via move-blocking~\cite{cagienard2007move}.
      We establish optimal-value ordering among hierarchical, bilevel, and centralized formulations and their blocked variants, and derive inexpensive a posteriori certificates that upper-bound the degradation induced by blocking.
      We also provide a constructive procedure to build low-rank blocking matrices from HMPC data that improve over HMPC on sampled points/regions.
    \end{itemize}
    In this paper, some proofs and remarks are omitted and provided in the appendix of the arXiv preprint \cite{Moriyasu2026bmpc}.

    \subsubsection*{Notation}
    Let us denote 
    the set of all $n\times n$ symmetric matrices by $\bbS^n$ and symmetric positive (semi-)definite matrices by $\bbS^n_{++} \, (\bbS^n_{+})$;
    the identity matrix and zero matrix of appropriate size by $I$ and $O$, respectively;
    the identity matrix with an explicit size of $n\times n$ by $I_n$;
    the $n$-dimensional vector whose all elements are $1$ by $\mathbf{1}_n$;
    the Moore–Penrose inverse of a matrix $A$ by $A^\dagger$;
    the $i$-th element of a vector $v$ by $(v)_i$; 
    the sub-vector consisting of $(v)_i, i \in J$ by $(v)_J$; 
    a collective vector $[v_1^\top \cdots v_n^\top]^\top$ by $[v_1;\cdots;v_n]$; 
    Kronecker product by $\otimes$; 
    a weighted norm $(x^\top P x)^{1/2}$ with a vector $x\in\R^n$ and a matrix $P\in \PD^n$ by $\norm{x}_P$.
    
\section{Standard Bilevel MPC Formulation}\label{sec:problem}

In this section, we define the baseline bilevel MPC. This is a bilevel controller where the upper level selects the reference sequence as a stack of steady-state pairs, and the lower level decides the input to track the reference (see Fig.~\ref{fig:bilevel_structure}).
Therefore, we introduce a minimal parameterization of the steady-state pair and then formulate the bilevel MPC.

    \subsection{Plant and Steady-State}
    
    Consider a discrete-time LTI system
    \begin{equation*}
      x_{k+1} = A x_k + B u_k,
    \end{equation*}
    with state $x_k \in \mathbb{R}^{n}$ and input $u_k \in \mathbb{R}^{m}$.
    
    \begin{assumption}\label{ass:plant}
    The matrix $S:= [\, I-A,\, -B\,] \in \mathbb{R}^{n\times(n+m)}$ has full row rank $(\rank (S) = n)$.
    \end{assumption}\noindent
    The controllability of $(A,B)$ is sufficient to satisfy the above assumption, 
    but it is not a necessary condition.
    
    The set of all steady-state pairs $(\bar x, \bar u)$, i.e., $ S [\bar x; \bar u]=0$, is an $m$-dimensional linear subspace of $\R^{n+m}$ because $\dim\ker S = (n+m)- \rank(S) =  m$.
    We fix a basis of $\ker S$ and write the basis matrix as
    \[
      Z:= \mat{ Z_x\\ Z_u } \in \mathbb{R}^{(n+m)\times m} \ (\im Z = \ker S).
    \]
    Hence, any steady-state pair $(\bar x,\bar u)$ can be written as $[\bar x; \bar u] = Z\theta$ with a minimal-dimensional parameter $\theta\in\mathbb{R}^{m}$.
    
    \begin{remark}[Steady-state map]\label{rem:ssmap}
    If $I-A$ is nonsingular, one admissible choice for $Z$ is $Z_x = (I-A)^{-1}B$ and $Z_u = I_{m}$, in which case $\bar u = \theta, \, \bar x = (I-A)^{-1}B\,\bar u$.
    If not, i.e., for systems with integrators, one can use \texttt{Z = null(S)} (in MATLAB) to obtain one of the admissible $Z$.
    When $I-A$ is nonsingular but ill-conditioned, a safe choice is to use \texttt{Z = null(S, tol)} with numerical tolerance \texttt{tol}.
    \end{remark}

\begin{figure}[t]
\centering
\begin{tikzpicture}[
    font=\footnotesize,
    auto,
    node distance=1.7cm,
    >=latex,
    block/.style={
        draw,
        rectangle,
        minimum height=0.7cm,
        minimum width=1.0cm,
        align=center
    },
    line/.style={->, thick}
]

\node[block] (upper) {\PROB{1}};
\node[block, right=of upper, xshift=3mm] (lower) {\PROB{L}};
\node[block, right=of lower] (plant) {LTI};

\node[above=1pt of upper] {Upper MPC};
\node[above=1pt of lower] {Lower MPC};
\node[above=1pt of plant] {Plant};

\draw[line] (upper) -- node[above,align=center] {Reference\\sequence} node[below]{$\Theta$} (lower);
\draw[line] (lower) -- node[above] {Input} node[below]{$u$} (plant);
\draw[line] (plant) -- ++(2.0,0) node[midway, above,align=center] {State,\\Output} node[midway, below] {$x,y$};

\end{tikzpicture}
\caption{Bilevel MPC architecture. The upper MPC selects the reference sequence $\Theta$, the lower MPC computes the input $u$ to track $\Theta$, and the plant returns the state and output $(x,y)$.}
\label{fig:bilevel_structure}
\end{figure}
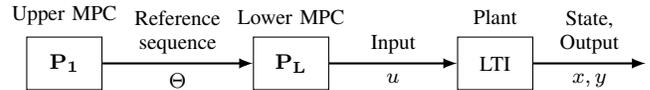

    \subsection{Control Problem}

    Collect the input sequence $U := [u_0; \dots ; u_{N-1}]$ over the horizon length $N$, and define the predicted state stack
    \begin{align*}
    & X(U,x_0) := [ x_1 ; \dots; x_N ] = \bar A x_0 + \bar B U,\\
    & \bar A   := \mat{ A\\ A^2\\ \vdots\\ A^N },\
      \bar B   := \mat{ B & O & \cdots & O\\
                       AB & B & \ddots & \vdots\\
                      \vdots & \ddots & \ddots & O\\
                      A^{N-1}B & \cdots & AB & B}.
    \end{align*}
    In addition, define the steady-state parameter sequence $\Theta := [\theta_0 ; \dots ; \theta_{N-1}]$, meaning that each stage-wise reference $(\bar x,\bar u)$ is a steady-state pair parameterized via $\theta$.
    Accordingly, we define the stacked steady-state references $\bar X(\Theta):=[\bar x_1;\dots;\bar x_N]$ and $\bar U(\Theta):=[\bar u_0;\dots;\bar u_{N-1}]$, with $[\bar x_{k+1};\bar u_k]=Z\theta_k$ for $k=0,\dots,N-1$.
    
    The lower-level MPC is defined as a strongly convex QP
    \begin{subequations}\label{eq:lower}
    \begin{align}
      & \mathbf{P_L}: \min_{U} \ F_l(U;\Theta,x_0) \ \text{s.t.}\ G_l(U;x_0) \le 0, \text{ where}\\
      & F_l := \tfrac12 \norm{ X(U,x_0) - \bar X(\Theta) }_{\bar Q}^2
      + \tfrac12 \norm{ U - \bar U(\Theta) }_{\bar R}^2, \label{eq:lowerobj}
    \end{align}        
    \end{subequations}
    $\bar Q\in \PSD^{Nn} $, $\bar R \in \PD^{Nm}$, and $G_l$ encode linear constraints (e.g., input, state, and output bounds).
    
    \begin{example}[Standard block?diagonal weights]\label{ex:example1}
    A standard weight setting for \PROB{L} is to use block?diagonal stacks:
    \begin{align*}
        \bar Q &= \mathrm{blkdiag}(\underbrace{Q,\dots,Q}_{N-1\ \text{times}},\,P),\ 
        \bar R = \mathrm{blkdiag}(\underbrace{R,\dots,R}_{N\ \text{times}}),
    \end{align*}
    with stage, terminal, and input weights $Q \in \PSD^{n}, \, P \in \PSD^n, \,\allowbreak R \in \PD^n$, respectively.
    \end{example}

    We now state the standard bilevel MPC studied in this paper.
    It chooses $\Theta$ to minimize an application-level objective $F_u(U;x_0)$ under upper-level constraints $G_u(U;x_0)\le 0$:
    \begin{subequations}\label{eq:P1}
    \begin{align}
    \mathbf{P_1}: \ & \min_{\Theta,\,U} \ F_u(U;x_0) \ \text{s.t.} \ G_u(U;x_0)\le 0,  \\
    & U = \arg\min_{V} \Set{ F_l(V;\Theta,x_0) | G_l(V;x_0)\le 0 },\label{eq:PLinP1}
    \end{align}
    \end{subequations}
    while enforcing that $U$ is generated by \PROB{L}.
    The standard method to solve \PROB{1} is to reformulate it as an MPCC:
    \begin{subequations}\label{eq:MPCC}
    \begin{align}
        \!\!\smash[b]{\min_{\Theta,U,\mu} F_u(U;x_0)} \, \text{s.t.} \,
        & G_u(U;x_0)\!\le\! 0, \nabla_U F_l \!+\! \nabla_U G_l \mu \!=\! 0, \!\! \\
        & G_l(U;x_0)\!\le\! 0, \ \mu \ge 0, \ G_l^\top \mu = 0,  \label{eq:cmpl}
    \end{align}       
    \end{subequations}
    where $\mu$ is the Lagrange multiplier.
    This problem is nonconvex due to its complementarity condition \eqref{eq:cmpl} even if $G_l$ is affine. It often causes issues such as a lack of reproducibility, insufficient performance, and discontinuity of the solution. 
    \rev{In this paper, $\min$ is treated as the global minimum.}

    \begin{remark}[Why we parameterize references via $\theta$]
        In tracking MPC, it is natural to restrict reference candidates to steady-state pairs.
        The parameterization
        (i) excludes unrealizable $(\bar X,\bar U)$, and thus, makes it easy to discuss stability,
        (ii) uses the minimal number of decision variables compared to treating $(\bar X,\bar U)$ with the constraint $(I-A)\bar X - B\bar U = 0$ explicitly, and also in our case,
        (iii) integrates smoothly with our method introduced in Section~\ref{sec:approx}.
    \end{remark}

\section{A Tractable Reduction}\label{sec:approx}

    In this section, we propose a smooth single-level reduction of \PROB{1} to avoid the issues with the complementarity condition \eqref{eq:cmpl} described above, and we show that the reduction can preserve performance and ensure the uniqueness of the solution and closed-loop stability under certain conditions.
    
    We propose the reduction for \PROB{1} by lifting the lower-level constraints to the upper level and replacing the lower-level problem with its \emph{unconstrained} first-order optimality condition. Hence, the complementarity condition disappears:\noindent
    \begin{subequations}\label{eq:P2}
    \begin{align}
        \!\! \mathbf{P_2}: \smash[b]{\min_{\Theta,\,U} F_u(U;x_0)} \ 
                \text{s.t.} \ & G_u(U;x_0) \! \le\! 0, G_l(U;x_0) \!\le\! 0, \\
                              & \nabla_U F_l(U;\Theta,x_0) = 0. \label{eq:P2grad}
    \end{align}
    \end{subequations}
    This can be viewed as a restriction of the design space of the MPCC \eqref{eq:MPCC} to $\mu=0$.
    The condition \eqref{eq:P2grad} is an affine equation in $U, \Theta$ for any fixed $x_0$.
    Indeed, differentiating \eqref{eq:lowerobj} yields
    \begin{align}
       \nabla_U F_l & = \bigl( \bar B^\top \bar Q\bar B + \bar R \bigr) U - \bar \Gamma \Theta +  \bar B^\top \bar Q \bar A x_0 = 0, \label{eq:grad} \\
       \bar \Gamma & := \bar B^\top \bar Q \bar Z_x + \bar R \bar Z_u \in\mathbb{R}^{N m\times N m}. \nonumber 
    \end{align}

    We consider an assumption on the matrix $\bar \Gamma$:
    \begin{assumption}\label{ass:nonsing}
        The matrix $\bar \Gamma$ is nonsingular.
    \end{assumption}\noindent
    \rev{This assumption is fairly mild. 
    Indeed, Appendix~I~\cite{Moriyasu2026bmpc} shows that (i) if the lower-level weights are tunable, it can be met by their design; and (ii) even if they cannot be changed, it holds in \emph{almost all} cases.}
    
    Crucially, under this assumption, the \emph{solution set}, which is the set of points that have the minimum value within the feasible set, of \PROB{2} becomes a subset of that of \PROB{1}.
    This implies that the optimal value does not change.
    \begin{theorem}[Feasible/solution-set inclusion]\label{thm:subset}
        Under Assumption~\ref{ass:plant} and \ref{ass:nonsing}, \rev{for $i=1,2$,} let $\mathcal F_i,\, V_i, \, \mathcal S_i$ be the feasible set, optimal value, and solution set of $\mathbf{P}_i$, for a fixed $x_0$.
        Then, 
        \[
        \text{\rm (1)} \ \mathcal F_2 \subseteq \mathcal F_1, \ 
        \text{\rm (2)} \ V_1 = V_2, \ 
        \text{\rm (3)} \ \mathcal S_2 \subseteq \mathcal S_1.
        \]
    \end{theorem}    
    \begin{proof}
        By Assumption~\ref{ass:nonsing}, for any $U$ and $x_0$, \eqref{eq:grad} admits the unique solution $\Theta = \Theta^\star(U;x_0)$, where
        \begin{align}\label{eq:Thetastar}
          \Theta^\star(U;x_0) := \bar \Gamma^{-1}
          \bigl( \bigl( \bar B^\top \bar Q\bar B + \bar R \bigr) U 
          + \bar B^\top \bar Q \bar A x_0 \bigr),
        \end{align} 
        which is obtained by solving \eqref{eq:grad} with respect to $\Theta$.\vspace{3pt}\\
        1) Let $(\Theta,U)$ be feasible for \PROB{2}. Then $G_l(U;x_0)\le0$ holds and
        $\Theta$ satisfies the lower-level stationarity. Hence $U$ solves the lower-level problem at $(\Theta,x_0)$, so $(\Theta,U)$ is feasible for \PROB{1}. This proves $\mathcal F_2\subseteq \mathcal F_1$.\vspace{3pt}\\
        2) From (1), $V_1\le V_2$. Conversely, take any feasible $(\Theta^\circ,U^\circ)$ of \PROB{1}. Replacing $\Theta^\circ$ by $\Theta^\star(U^\circ,x_0)$ preserves feasibility in \PROB{2} and leaves the upper-level objective unchanged as $F_u$ depends only on $U$. 
        Hence $V_2\le V_1$, which yields $V_1=V_2$.\vspace{3pt}\\        
        3) Let $(\Theta^\star,U^\star)\in\mathcal S_2$. By (1), it is feasible for \PROB{1}, and by (2), it attains the value $V_1$; therefore $(\Theta^\star,U^\star)\in\mathcal S_1$.
    \end{proof}\noindent
    This shows that enforcing the first-order stationarity relation in \PROB{2} provides a direct handle on bilevel-feasible pairs, i.e., one can obtain (a subset of) \PROB{1} solutions without tackling an MPCC \eqref{eq:MPCC} directly.
    In this sense, \PROB{2} provides a tractable route to solutions of \PROB{1}: solving \PROB{2} already returns candidates that are bilevel-consistent and hence eligible to be optimal for \PROB{1} once the upper objective $F_u$ is minimized.

    On the other hand, the whole bilevel MPC system, i.e., a cascade connection of the upper-level MPC (\PROB{1} or \PROB{2}), the lower-level MPC (\PROB{L}), and the plant, is still structurally difficult to analyze for its closed-loop properties, such as stability.
    Therefore, as an analytical tool for the bilevel system, we next introduce an associated centralized MPC:
    \begin{align}\label{eq:P3}
        \! \mathbf{P_3}: \, \min_{U} F_u(U;x_0) \ \text{s.t.} \ G_u(U;x_0) \! \le \!0, G_l(U;x_0) \! \le \! 0.\!
    \end{align}
    Under strong convexity, we show that the cascade of \PROB{1} or \PROB{2} and \PROB{L} is equivalent to the centralized MPC \PROB{3}, and therefore, the closed-loop properties are inherited from \PROB{3}.
    
    \begin{assumption}[Upper-level convexity]\label{ass:upper}
        For fixed $x_0$, $F_u(\cdot;x_0)$ and $G_u(\cdot;x_0)$ in \eqref{eq:P1} are strongly convex and convex, respectively, and $\mathcal C(x_0):=\{U\mid G_u(U;x_0)\le 0,\ G_l(U;x_0)\le 0\}\neq\emptyset$.
    \end{assumption}

    \begin{lemma}[Uniqueness of solution]\label{lem:uniqueness}
        Under Assumptions~\ref{ass:plant}, \ref{ass:nonsing}, and \ref{ass:upper}, for any fixed $x_0$,
        \PROB{1}, \PROB{2}, and \PROB{3} have the same unique optimizer in $U$, denoted by $U^\star$.
        In addition, \PROB{2} has the unique optimizer $(\Theta^\star(U^\star;x_0),U^\star)$.
    \end{lemma}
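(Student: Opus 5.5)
The argument rests on comparing the feasible sets in $U$ and using Theorem~\ref{thm:subset}. Fix $x_0$ and consider first \PROB{3}. Its feasible set $\mathcal C(x_0)$ is nonempty by Assumption~\ref{ass:upper}. It is convex, since $G_u$ is convex and $G_l$ is affine. It is also closed, because $G_u$ is a convex function finite on all of $\R^{Nm}$, hence continuous, so its sublevel set is closed; the same holds for the affine $G_l$. A strongly convex, continuous function on a nonempty closed convex set is coercive and strictly convex. It therefore attains its minimum at exactly one point, which we call $U^\star$.

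Next, project the feasible sets of \PROB{1} and \PROB{2} onto $U$ and compare them with $\mathcal C(x_0)$.

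For \PROB{2}, I claim $\{U \mid \exists\Theta: (\Theta,U)\in\mathcal F_2\} = \mathcal C(x_0)$. One inclusion is immediate, because the constraints of \PROB{2} contain $G_u\le0$ and $G_l\le0$. For the converse, take any $U\in\mathcal C(x_0)$ and choose $\Theta=\Theta^\star(U;x_0)$ from \eqref{eq:Thetastar}; this is well defined by Assumption~\ref{ass:nonsing}. With this choice \eqref{eq:P2grad} holds, so $(\Theta^\star(U;x_0),U)\in\mathcal F_2$.

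Since the objective of \PROB{2} depends only on $U$, \PROB{2} reduces to \PROB{3}. Consequently every optimizer $(\Theta,U)$ of \PROB{2} has $U=U^\star$. Moreover, \eqref{eq:grad} has a unique solution $\Theta$ for the given $U$, which forces $\Theta=\Theta^\star(U^\star;x_0)$. This yields the second claim, that \PROB{2} has the unique optimizer $(\Theta^\star(U^\star;x_0),U^\star)$.

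For \PROB{1}, any feasible $(\Theta,U)$ satisfies $G_u(U)\le0$, and \eqref{eq:PLinP1} gives $G_l(U)\le0$. Hence the $U$-projection of $\mathcal F_1$ lies in $\mathcal C(x_0)$, and $V_1\ge V_3$. Theorem~\ref{thm:subset}(2) gives $V_1=V_2$, and we have just shown $V_2=V_3$. Now let $(\Theta,U)\in\mathcal S_1$. Then $U\in\mathcal C(x_0)$ and $F_u(U)=V_1=V_3$, so $U$ is a minimizer of \PROB{3}; by uniqueness, $U=U^\star$.

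Existence for \PROB{1} also follows: $\mathcal S_2\subseteq\mathcal S_1$ by Theorem~\ref{thm:subset}(3), and $\mathcal S_2$ is nonempty. Thus all three problems have the unique optimizer $U^\star$ in $U$.

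The main care point is the existence step for \PROB{3}. It requires closedness of $\mathcal C(x_0)$ and coercivity of a strongly convex function, both of which hold in the finite-dimensional setting here. Everything else reduces to Theorem~\ref{thm:subset} and the bijection $U\mapsto\Theta^\star(U;x_0)$ provided by Assumption~\ref{ass:nonsing}. Note that the $\Theta$-component of an optimizer of \PROB{1} need not be unique, since $\mu\neq0$ may be allowed there. The lemma, however, only claims uniqueness in $U$ for \PROB{1}, and this is consistent with the argument above.
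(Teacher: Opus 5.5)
Your proof is correct and follows essentially the same route as the paper. Both arguments reduce \PROB{1}, \PROB{2}, and \PROB{3} to minimizing $F_u(\cdot;x_0)$ over $\mathcal C(x_0)$, using the $U$-projections of the feasible sets and the bijection $U\mapsto\Theta^\star(U;x_0)$. The differences are minor: you treat \PROB{1} through Theorem~\ref{thm:subset}(2)--(3) rather than showing, via $\mathcal F_2\subseteq\mathcal F_1$, that its $U$-projection equals $\mathcal C(x_0)$, and you justify existence of the minimizer (closedness of $\mathcal C(x_0)$ and coercivity), which the paper leaves implicit.
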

    \begin{proof} 
        Under Assumption~\ref{ass:nonsing}, using $\Theta^\star(U;x_0)$ in \eqref{eq:Thetastar}, the feasible set of \PROB{2} is $\mathcal F_2 = \{\,(\Theta^\star(U;x_0),U)\mid U\in\mathcal C(x_0)\,\}$, so its $U$-projection is exactly $\mathcal C(x_0)$. By Theorem~\ref{thm:subset}, $\mathcal F_2\subseteq\mathcal F_1$, and thus every $U\in\mathcal C(x_0)$ is attained by some feasible pair of \PROB{1}.
        Conversely, if $(\Theta,U)$ is feasible for \PROB{1}$,$ then necessarily $G_u(U;x_0)\le 0$ and $G_l(U;x_0)\le 0$, i.e., $U\in\mathcal C(x_0)$.
        Therefore, the $U$-projection of the feasible set of \PROB{1} is also exactly $\mathcal C(x_0)$.
        
        Thus, \PROB{1}, \PROB{2}, and \PROB{3} all reduce, at the level of $U$, to minimizing $F_u(U;x_0)$ over the same set $\mathcal C(x_0)$.
        Since $F_u(\cdot;x_0)$ is strongly convex by Assumption~\ref{ass:upper}, this minimization admits a unique optimizer $U^\star$.
        Hence all three problems share the same unique optimizer in $U$.
        Finally, since \PROB{2} associates a unique $\Theta^\star(U;x_0)$ with each $U$, its optimizer is uniquely given by $(\Theta^\star(U^\star;x_0),U^\star)$.
    \end{proof}

    \begin{theorem}[Equivalence to centralized MPC]\label{thm:equiv}
        Under Assumptions~\ref{ass:plant}, \ref{ass:nonsing}, and \ref{ass:upper}, for any fixed $x_0$,
        let $(\Theta, U)$ be any optimizer of \PROB{1} or \PROB{2}.
        Then, solving \PROB{L} with $\Theta$ returns the unique optimizer of \PROB{3}, namely $U^\star$.
        Hence, the cascades \PROB{1}$\to$\PROB{L} and \PROB{2}$\to$\PROB{L} are equivalent to \PROB{3} in their realized input (Fig.~\ref{fig:bilevel_structure_equiv}).
    \end{theorem}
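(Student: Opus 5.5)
The plan is to combine Lemma~\ref{lem:uniqueness} with the strong convexity of \PROB{L}. Take any optimizer $(\Theta,U)$ of \PROB{1} or of \PROB{2}. By Lemma~\ref{lem:uniqueness}, the $U$-component of every optimizer of either problem equals the common unique optimizer $U^\star$ of \PROB{3}. So it suffices to show that solving \PROB{L} with this $\Theta$ returns exactly this $U=U^\star$.

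\emph{Case \PROB{1}.} Here feasibility in \PROB{1} means, by constraint \eqref{eq:PLinP1}, that $U$ is the minimizer of $F_l(\cdot;\Theta,x_0)$ over $\{V\mid G_l(V;x_0)\le0\}$. Since $\bar R\in\PD^{Nm}$ and $\bar Q\succeq0$, the Hessian $\bar B^\top\bar Q\bar B+\bar R$ of $F_l$ is positive definite, so \PROB{L} is a strongly convex QP. The set $\{V\mid G_l(V;x_0)\le0\}$ is polyhedral and nonempty, since it contains $U^\star$ by Assumption~\ref{ass:upper}. Hence \PROB{L} has a unique minimizer, and it is $U=U^\star$.

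\emph{Case \PROB{2}.} Here $U$ satisfies $G_l(U;x_0)\le0$ and $\nabla_U F_l(U;\Theta,x_0)=0$. The point $U$ is therefore the unconstrained global minimizer of the strongly convex $F_l(\cdot;\Theta,x_0)$. Since it is also feasible for \PROB{L}, it is the constrained minimizer of \PROB{L}, and this minimizer is unique by strong convexity. This is exactly the argument behind item (1) of Theorem~\ref{thm:subset}, which also gives $(\Theta,U)\in\mathcal F_1$. In both cases \PROB{L}$(\Theta)$ returns $U=U^\star$.

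The cascade statement follows directly. The realized input of \PROB{1}$\to$\PROB{L} or \PROB{2}$\to$\PROB{L} is $U^\star$ (in receding horizon, its first block $u_0^\star$), and this coincides with the output of \PROB{3} for every $x_0$. The main point needing care is well-posedness of ``solving \PROB{L} returns'' a single vector, i.e., uniqueness of the \PROB{L} minimizer, which comes from $\bar R\succ0$. A second subtlety is that Lemma~\ref{lem:uniqueness} gives uniqueness only in $U$, not in $\Theta$, for \PROB{1}. The argument does not need $\Theta$ to be unique: any optimal $\Theta$ of \PROB{1} works, because the \PROB{1} constraint itself forces \PROB{L}$(\Theta)$ to return that optimizer's $U$-component, which Lemma~\ref{lem:uniqueness} identifies as $U^\star$.
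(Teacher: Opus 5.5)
Your proposal is correct and follows essentially the same route as the paper. You use Lemma~\ref{lem:uniqueness} to get $U=U^\star$, read off \PROB{L}-optimality of $U^\star$ directly from the bilevel constraint for \PROB{1}, and for \PROB{2} note that stationarity plus lower-level feasibility make $U^\star$ the unique minimizer of the strongly convex \PROB{L}. Your extra remarks (uniqueness of the \PROB{L} minimizer via $\bar R\succ0$, and that non-uniqueness of $\Theta$ is harmless) are correct refinements of steps the paper leaves implicit.
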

    
    \begin{proof} 
        From Lemma~\ref{lem:uniqueness}, $U = U^\star$.
        For \PROB{1}, feasibility already means that $U^\star$ solves \PROB{L} with $\Theta$.
        For \PROB{2}, the conditions
        \[
        G_\ell(U^\star;x_0)\le 0,\
        \nabla_U F_\ell(U^\star;\Theta,x_0)=0
        \]
        are precisely the first-order optimality conditions of \PROB{L}.
        Since \PROB{L} is strongly convex, $U^\star$ is the unique global optimizer.
        Thus, in either case, solving \PROB{L} with $\Theta$ yields $U = U^\star$, which is also the unique optimizer of \PROB{3}.
    \end{proof}
    \noindent

    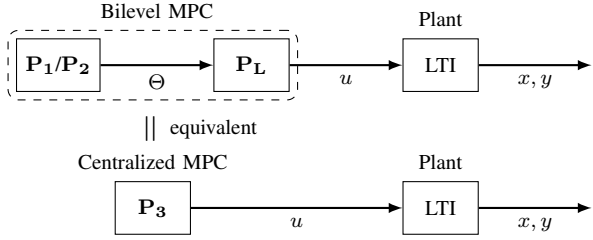
\begin{figure}[t]
    \centering
    \begin{tikzpicture}[
        font=\footnotesize,
        auto,
        node distance=1.5cm,
        >=latex,
        block/.style={
            draw,
            rectangle,
            minimum height=0.7cm,
            minimum width=1.0cm,
            align=center
        },
        line/.style={->, thick}
    ]
    
    \node[block] (upper) {\PROB{1}/\PROB{2}};
    \node[block, right=of upper] (lower) {\PROB{L}};
    \node[block, right=of lower] (plant1) {LTI};
    
    \node[above=1pt of plant1] {Plant};
    
    \coordinate (out1) at ($(plant1.east)+(1.5,0)$);
    
    \draw[line] (upper) --
        node[below] {$\Theta$}
        node[above=14pt] {Bilevel MPC}
        (lower);
    
    \draw[line] (lower) --
        node[below] {$u$}
        (plant1);
    
    \draw[line] (plant1) --
        node[midway, below] {$x,y$}
        (out1);
    
    \coordinate (fitup) at ($(upper.east)!0.5!(lower.west)+(0,0.32)$);
    \coordinate (fitdown) at ($(upper.east)!0.5!(lower.west)+(0,-0.32)$);
    
    \node[
        draw,
        dashed,
        rounded corners,
        inner sep=3pt,
        fit=(upper)(lower)(fitup)(fitdown)
    ] (group) {};
    
    \node[block] (p3) at ($(group.center)+(0,-1.9)$) {\PROB{3}};
    \node[block] (plant2) at ($(plant1.center)+(0,-1.9)$) {LTI};
    
    \node[above=1pt of p3] {Centralized MPC};
    \node[above=1pt of plant2] {Plant};
    
    \coordinate (out2) at ($(out1)+(0,-1.9)$);
    
    \draw[line] (p3) --
        node[below] {$u$}
        (plant2);
    
    \draw[line] (plant2) --
        node[midway, below] {$x,y$}
        (out2);
    
    \coordinate (eqpos) at ($(group.south)!0.35!(p3.north)$);
    
    \node[font=\Large, rotate=90] (eqmark) at (eqpos) {$=$};
    \node[font=\footnotesize, right=3pt of eqmark, yshift=-9pt] {equivalent};
    
    \end{tikzpicture}
    \caption{Equivalence between bilevel MPC and centralized MPC: the cascade of \PROB{1} or \PROB{2} and \PROB{L} realizes the same input as \PROB{3}, under Assumptions~\ref{ass:plant}, \ref{ass:nonsing}, and \ref{ass:upper}.}
    \label{fig:bilevel_structure_equiv}
    \end{figure}

    \begin{corollary}[Inheritance of closed-loop properties]\label{cor:stability}
    Under Assumptions~\ref{ass:plant}, \ref{ass:nonsing}, and \ref{ass:upper}, suppose the closed-loop system with \PROB{3} is (globally or locally) stable with respect to a fixed point $x_\infty$. 
    Then, the bilevel architectures combining \PROB{L} with \PROB{1} or \PROB{2} inherit the corresponding stability property.
    \end{corollary}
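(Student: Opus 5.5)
The plan is to reduce the corollary to Theorem~\ref{thm:equiv} by showing that the two closed-loop systems are generated by literally the same state-feedback map. Define the feedback laws
\[
\kappa_3(x_0) := U^\star(x_0)_{1:m},
\]
the first input of the unique optimizer of \PROB{3}. Define $\kappa_{1}(x_0)$ and $\kappa_2(x_0)$ analogously: each is the first input returned by \PROB{L} when it is fed an optimizer $\Theta$ of \PROB{1} or \PROB{2}, respectively.

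\textbf{Step 1: the feedback laws are well defined and agree.} Fix any $x_0$ in the region where the closed loop under \PROB{3} is defined. This means $\mathcal C(x_0)\neq\emptyset$, so Assumption~\ref{ass:upper} holds there. Lemma~\ref{lem:uniqueness} then gives existence and uniqueness of $U^\star(x_0)$. Theorem~\ref{thm:equiv} gives that for \emph{any} optimizer $\Theta$ of \PROB{1} or \PROB{2}, solving \PROB{L} returns exactly $U^\star(x_0)$.

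The possible non-uniqueness of $\Theta$ in \PROB{1} therefore does not matter: whatever selection the upper layer makes, the realized input is the same. Hence $\kappa_1(x_0)=\kappa_2(x_0)=\kappa_3(x_0)$.

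\textbf{Step 2: the trajectories coincide.} Write the closed loops as $x_{k+1}=Ax_k+B\kappa_i(x_k)$. By induction on $k$, starting from the same initial state, the state sequences under $\kappa_1$, $\kappa_2$ and $\kappa_3$ coincide for all $k$, as long as each visited $x_k$ stays in the set where the assumptions hold.

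For the closed loop under \PROB{3}, this set is exactly where that loop is well defined. Stability of the \PROB{3} loop presupposes recursive feasibility on its region of attraction, so the visited states never leave it. Consequently, the two closed-loop maps are identical functions on that region.

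\textbf{Step 3: transfer the stability property.} Lyapunov stability, attractivity and asymptotic stability of $x_\infty$, in either the local or the global version, are properties of the map $x\mapsto Ax+B\kappa(x)$ alone. Each therefore transfers verbatim. In particular, any Lyapunov function certifying stability for \PROB{3}, such as its optimal value function, is also a Lyapunov function for the bilevel loops.

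\textbf{Main obstacle.} The only delicate point is the domain issue in Step 2. Assumption~\ref{ass:upper} is stated for a fixed $x_0$, so it must hold along the entire trajectory, not only at the initial state.

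My first attempt is to interpret the corollary's hypothesis as including recursive feasibility of \PROB{3} on the relevant region. Since stability of the centralized loop is only meaningful where it is defined, this interpretation is natural, and it gives $\mathcal C(x_k)\neq\emptyset$ for all $k$. I will also require the strong convexity in Assumption~\ref{ass:upper} to hold for every $x_0$ in that region, which I take to be intended by ``for fixed $x_0$''.
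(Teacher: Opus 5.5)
Your proposal is correct and is essentially the paper's own argument: by Theorem~\ref{thm:equiv}, at every time step the cascade of $\mathbf{P_1}$ or $\mathbf{P_2}$ with $\mathbf{P_L}$ applies exactly the first move $\kappa(x_k)$ of the centralized law, so the closed-loop trajectories coincide and the stability property carries over. Your handling of the domain issue is a worthwhile clarification rather than a departure from that argument: you require Assumption~\ref{ass:upper}, including $\mathcal C(x_k)\neq\emptyset$, to hold along the entire trajectory, i.e., recursive feasibility of $\mathbf{P_3}$, which the paper leaves implicit in its phrase ``for any initial state.''
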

    \begin{proof} The proof follows directly from the equivalence between \PROB{3} and the cascade connection of \PROB{1} or \PROB{2} and \PROB{L}. The details are provided in Appendix~II~\cite{Moriyasu2026bmpc}.
    \end{proof}
    
    \PROB{2} can be viewed as a problem that computes the reference sequence $\Theta$ that reproduces the optimal $U$ of \PROB{3} as an unconstrained solution for the lower-level MPC. 
    Indeed, it is also possible to first solve \PROB{3}, then solve \eqref{eq:grad} to obtain $\Theta$, and feed that into \PROB{L}.
    Therefore, if there is no need for a hierarchical design, directly using \PROB{3} is the best option.
    Crucially, even when a hierarchical implementation is desired (for purposes such as reusing existing controllers and dividing development tasks), one can achieve the same performance as \PROB{1} by using \PROB{2} and provide the stability guarantee for the bilevel implementation \PROB{1} or \PROB{2} with \PROB{L} by checking \PROB{3}.

    \begin{remark}[Implicit assumption]
        The above results assume that the underlying optimization problems are solved to the global optima.
        In \PROB{2} and \PROB{3}, this can be achieved by off--the--shelf convex solvers, under Assumptions~\ref{ass:plant}, \ref{ass:nonsing}, and \ref{ass:upper}. 
        In contrast, the original bilevel formulation \PROB{1} yields a nonconvex problem that is sensitive to initialization and prone to spurious local minima. 
        In practice, the returned solution may deviate from the behavior anticipated by Theorems~\ref{thm:subset} and \ref{thm:equiv} and Corollary~\ref{cor:stability}. 
        Section~V demonstrates this empirically.
        \rev{The same issue arises if the numerical solutions of \PROB{2} and \PROB{3} are suboptimal, which remains a topic for future research.}
    \end{remark}

\section{Hierarchical--to--Bilevel Connection}\label{sec:bridge}

    This section connects bilevel MPC and standard HMPC through move-blocking, which reduces the dimension of the decision variable in MPC. 
    This demonstrates that bilevel MPC is a natural extension of HMPC and enables the design of trade-offs between computational efficiency and performance with certified bounds.
        
\subsection{HMPC as a One-Block Reference Policy}

    We first recall a standard HMPC baseline that solves a steady-state optimization:
    \begin{align}\label{eq:P0}
        \!\!\mathbf{P_0}\!: \ &
        \theta^* \!=\! \arg\min_{\theta} \left\{ \ell(x,u)\, \middle|\, \mat{ g_l(x,u)\\ g_u(x,u) } \!\le\! 0 \right\} \!, \mat{x \\ u} \!=\! Z \theta \!\!
    \end{align}
    where $\ell$ is the cost function and $g_l,\, g_u$ are the constraints for the lower and upper levels, respectively.
    Its solution is given to the lower-level MPC \PROB{L} by stacking it as
    \begin{equation}\label{eq:M1}
        \Theta_0 := M_1 \theta^\ast, \
        M_1 := \mathbf{1}_N \otimes I_m \in \R^{Nm\times m}.
    \end{equation}
    The matrix $M_1$ represents the one-block (constant-in-horizon) blocking, so any $\theta \in \R^m$ generates a stacked reference $\Theta = M_1 \theta$. 
    Therefore, \PROB{0} can be seen as a problem of selecting $\Theta \in \im M_1$; i.e., HMPC is a one-block reference policy.
    Here, we consider measuring the \emph{transient} performance of \PROB{0} with the cost function $F_u$ for \PROB{1}--\PROB{3}:
    \begin{align*}
        V_0(x_0) &:= F_u(U_0(x_0);x_0), \\
        U_0(x_0) &:= \arg\min_{U}\{F_l(U;\Theta_0,x_0)\mid G_l(U;x_0)\le 0\}.
    \end{align*}
    Although \PROB{0} has a different objective function $\ell$ than $F_u$, this can be justified if 
    we define $F_u(U;x_0) = \sum_{k=0}^{N-1}\ell(x_k,u_k)$, which is the natural extension of $\ell$ to measure the transient performance.
    We can also add the terminal cost to it to ensure convergence of \PROB{1}--\PROB{3} to the optimal steady-state of \PROB{0} formally (see Appendix~III~\cite{Moriyasu2026bmpc}).
        

\subsection{Move-Blocked Bilevel Problems}

    \begin{assumption}[Blocking matrix]\label{ass:blocking}
        The blocking matrix $M\in\R^{Nm\times p}$ has full column rank, i.e., $\rank(M)=p$.
    \end{assumption}\noindent
    For the above $M$, restricting $\Theta$ to the subspace $\im M$ via
    \[
    \Theta = M\Phi,\  \Phi\in\R^{p},
    \]
    leads the move-blocked version of \PROB{1} and \PROB{2}:
    \begin{align}
        \mathbf{P_1}(M): \ & \min_{\Phi,\,U} \ F_u(U;x_0) \ \text{s.t.} \ G_u(U;x_0)\le 0,\label{eq:P1M}\\ 
        & U = \arg\min_{V}\{\, F_l(V;M\Phi,x_0)\mid G_l(V;x_0)\le 0\,\}, \nonumber 
        \end{align}
        \begin{align}
        \mathbf{P_2}(M): \ & \min_{\Phi,\,U} \ F_u(U;x_0) \ \text{s.t.} \ G_u(U;x_0)\le 0,\label{eq:P2M}\\ & G_l(U;x_0)\le 0,\ \ \nabla_U F_l(U;M\Phi,x_0)=0.\nonumber
    \end{align}
    For a fixed $x_0$, let $V_i(M)$ denote the optimal value of $\mathbf{P}_i(M)$, respectively.

    HMPC \PROB{0} computes a fixed reference and stacks it as $\Theta_0 \in \im M_1$.
    In contrast, \PROB{1}$(M)$ and \PROB{2}($M$) optimize transient behavior through $F_u(U;x_0)$ while still restricting reference to a structured subspace $\im M$.
    By enlarging $\im M$ (relaxing blocking), we obtain a continuous interpolation from HMPC-like one-block references to full bilevel MPC.

\subsection{Optimal-Value Ordering}

    The order of the optimal values for \PROB{0}--\PROB{3} and move-blocked variants \PROB{1}($M$), \PROB{2}($M$) is summarized as:
    \begin{theorem}[Optimal-value ordering]\label{thm:ordering_MB}
    Suppose Assumption~\ref{ass:blocking} holds for any blocking matrices below. 
    Under Assumptions~\ref{ass:plant},~\ref{ass:nonsing} and a fixed $x_0$, the following hold. 
    \begin{enumerate}[\labelindent=0pt]
        \item $V_1 = V_2 \le V_1(M) \le V_2(M)$ for any $M$.
        \item $V_i(\hat M) \le V_i(M) \ (i=1,2)$ for any $M,\hat M$ satisfying $\im(M)\subseteq \im(\hat M)$. 
        \item $V_1(M_1) \le V_0$ if $G_u(U_0;x_0) \le 0$.
        \item $V_2(M_1) \le V_0$ if $G_u(U_0;x_0) \le 0, \nabla_U F_l(U_0;\Theta_0,x_0)\!=\!0$.
    \end{enumerate}
    \end{theorem}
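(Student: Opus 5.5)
All four items follow from feasible-set inclusions. Since every problem minimizes the same objective $F_u(U;x_0)$ over different feasible sets, the plan is to exhibit inclusions (or explicit feasible points) and read off the inequalities. Throughout, minima are global, and an infeasible problem is assigned the value $+\infty$.

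\emph{Item 1.}
The equality $V_1=V_2$ is Theorem~\ref{thm:subset}(2).

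For $V_2\le V_1(M)$: take any feasible $(\Phi,U)$ of \PROB{1}$(M)$. Then $(M\Phi,U)$ is feasible for \PROB{1} with the same objective value, so $V_1\le V_1(M)$. Combined with $V_1=V_2$, this gives $V_2\le V_1(M)$.

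For $V_1(M)\le V_2(M)$: repeat the argument of Theorem~\ref{thm:subset}(1) with $\Theta=M\Phi$. A pair feasible for \PROB{2}$(M)$ satisfies $G_l(U;x_0)\le0$ and $\nabla_U F_l(U;M\Phi,x_0)=0$. Because \PROB{L} is a strongly convex QP, these two conditions show that $U$ is the unique minimizer of the lower problem at $M\Phi$. Hence the pair is feasible for \PROB{1}$(M)$, which gives $V_1(M)\le V_2(M)$.

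This item does not use Assumption~\ref{ass:nonsing} beyond its role in Theorem~\ref{thm:subset}. Indeed, the blocked reduction is in general strictly restricted, because $\Theta^\star(U;x_0)$ need not lie in $\im M$. That is why only inequalities hold here.

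\emph{Item 2.}
Since $\im M\subseteq\im\hat M$ and $\hat M$ has full column rank (Assumption~\ref{ass:blocking}), every $\Phi$ admits a unique $\hat\Phi$ with $M\Phi=\hat M\hat\Phi$; explicitly, $\hat\Phi=\hat M^\dagger M\Phi$. The constraints of \PROB{i}$(M)$ depend on $\Phi$ only through $M\Phi$. Hence $(\Phi,U)\mapsto(\hat\Phi,U)$ maps feasible points of \PROB{i}$(M)$ to feasible points of \PROB{i}$(\hat M)$ with equal objective value. This gives $V_i(\hat M)\le V_i(M)$ for $i=1,2$.

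\emph{Items 3 and 4.}
Here I exhibit a concrete feasible point. Take $\Phi=\theta^\ast$, so that $M_1\Phi=\Theta_0$ by \eqref{eq:M1}, together with $U=U_0(x_0)$.

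For \PROB{1}$(M_1)$, the lower-level condition holds by the very definition of $U_0$. The upper constraint $G_u(U_0;x_0)\le0$ is exactly the stated hypothesis. Hence $V_1(M_1)\le F_u(U_0;x_0)=V_0$.

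For \PROB{2}$(M_1)$, the constraint $G_l(U_0;x_0)\le0$ holds because $U_0$ is feasible for \PROB{L}. The stationarity condition $\nabla_U F_l(U_0;\Theta_0,x_0)=0$ is the extra hypothesis. Hence $V_2(M_1)\le V_0$.

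\emph{Main obstacle.}
The main subtlety is not computational but one of well-posedness. We need \PROB{0} to have a solution so that $\theta^\ast$ and hence $U_0$ exist, and we need \PROB{L} at $\Theta_0$ to be feasible. I would state these as standing implicit conditions under which $V_0$ is defined.

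A second point to handle carefully is in item 1, the step $V_1(M)\le V_2(M)$. There I must invoke the sufficiency of the KKT conditions with $\mu=0$ for the convex lower problem, rather than Assumption~\ref{ass:nonsing}. This sufficiency is what makes the argument valid for arbitrary $M$.
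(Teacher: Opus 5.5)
Your proposal is correct and follows the paper's proof essentially step by step. For item 1 you use the restriction of references to $\im M$ together with the fact that feasibility plus unconstrained stationarity makes $U$ the lower-level optimizer, for item 2 you use $\hat\Phi=\hat M^\dagger M\Phi$ (the paper's $T$), and for items 3--4 you use the explicit feasible point $(\theta^\ast,U_0)$; your well-posedness remark about the existence of $\theta^\ast$ and $U_0$ makes explicit an assumption the paper leaves implicit.
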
    
    \begin{proof} \ \\
    1) The equality $V_1=V_2$ follows from Theorem~\ref{thm:subset}. 
    The move-blocking restricts reference candidates, hence $V_1\le V_1(M),\, V_2\le V_2(M)$.
    If $(\Phi,U)$ is feasible for \PROB{2}($M$), then $G_l(U;x_0)\le 0$ and $\nabla_U F_l(U;M\Phi,x_0)=0$ hold,
    which implies that $U$ is the (unique) lower-level optimizer at $(M\Phi,x_0)$; thus $(\Phi,U)$ is feasible for \PROB{1}($M$) and $V_1(M)\le V_2(M)$ holds.\vspace{3pt}\\
    2) $\im(M)\subseteq \im(\hat M)$ implies the existence of $T$ such that $M=\hat M T$,\footnote{Inclusion $\im(M)\subseteq \im(\hat M)$ can be confirmed by the equivalent condition: $\rank \hat M = \rank [M, \hat M]$. In this case, one matrix $T$ s.t. $M = \hat M T$ can be obtained by $T = \hat M^\dagger M$. } hence the feasible set under $M$ is contained in that under $\hat M$. \vspace{3pt}\\
    3) By construction, we have $\Theta_0 = M_1\theta^\ast$ and $U_0 = \arg\min_{U}\{\, F_l(U;\Theta_0,x_0)\mid G_l(U;x_0)\le 0\,\}$, hence $(\theta^\ast,U_0)$ satisfies the bilevel constraint in \PROB{1}($M_1$).
    Together with $G_u(U_0;x_0)\le 0$, it is feasible and yields $V_1(M_1) \le V_0$.\vspace{3pt}\\
    4) In the proof of (3), if $\nabla_U F_l(U_0;\Theta_0,x_0)=0$ also holds, $(\theta^\ast,U_0)$ satisfies the constraint in \PROB{2}($M_1$). Similarly, $V_2(M_1) \le V_0$ holds.
    \end{proof}\noindent
    Note that if Assumption~\ref{ass:upper} is additionally satisfied, $V_1 = V_2 = V_3 \le V_1(M) \le V_2(M)$ holds from Theorem~\ref{thm:equiv}.
   
    The control sequence $U_0$ from \PROB{0} does not guarantee the upper-level constraint satisfaction $G_u(U_0;x_0) \le 0$.
    This fact motivates the extension from \PROB{0} to \PROB{1}($M$) or \PROB{2}($M$).
    Even if \PROB{0} satisfies $G_u(U_0;x_0) \le 0$, we have $V_1(M_1) \le V_0$, showing the upper-dominance of \PROB{1}($M_1$).
    Furthermore, $\nabla_U F_l(U_0;\Theta_0,x_0)=0$ is required to have $V_2(M_1) \le V_0$. 
    However, we can show the existence of a low-rank $M$ such that $V_2(M) \le V_0$ holds in a certain operational range of $x_0$.

    \begin{proposition}[Low-rank $M$ satisfying $V_2(M)\le V_0$ on region(s)]
    \label{prop:exist_lowrankM_design}
    Under Assumptions~\ref{ass:plant} and \ref{ass:nonsing}, assume that $\{\mathcal R_j\}_{j=1}^J$ satisfy $G_u(U_0(x_0);x_0)\le 0$ for all $x_0\in\cup_{j=1}^J\mathcal R_j$ and define $\mathcal S_{\mathcal R_j}:=\{\Theta^\star(U_0(x_0);x_0)\mid x_0\in\mathcal R_j\}$.
    \begin{enumerate}[\labelindent=0pt]
        \item There exists $M$ such that $\cup_{j=1}^J\mathcal S_{\mathcal R_j}\subseteq\mathrm{im}(M)$, and any such $M$ guarantees \vspace{-6pt}
        \begin{align*}
            V_2(M;x_0) \le V_0(x_0) \  \forall x_0\in\bigcup_{j=1}^J\mathcal R_j.
        \end{align*}\vspace{-5pt}
        \item In case $\{\mathcal R_j\}_{j=1}^J$ is chosen so that the realized control law of \PROB{0} is affine for each region: $U_0(x_0)=A_j x_0+b_j\ \forall x_0\in\mathcal R_j$, one may choose $M$ with \vspace{-2pt}
        \begin{align*}
            p = \mathrm{rank}(M) \le J(n+1).
        \end{align*}
    \end{enumerate}
    \end{proposition}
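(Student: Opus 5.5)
The plan is to reduce part (1) to a feasibility argument for \PROB{2}$(M)$ and then obtain part (2) by a span-dimension count.

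\textbf{Part (1).} Fix $x_0\in\mathcal R_j$ and set $\Theta_0^\star:=\Theta^\star(U_0(x_0);x_0)$ as in \eqref{eq:Thetastar}. By the definition of $\Theta^\star$ under Assumption~\ref{ass:nonsing}, $\nabla_U F_l(U_0(x_0);\Theta_0^\star,x_0)=0$.

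The remaining constraints of \PROB{2}$(M)$ also hold at $U_0(x_0)$. Since $U_0(x_0)$ is the lower-level optimizer, $G_l(U_0(x_0);x_0)\le 0$, and $G_u(U_0(x_0);x_0)\le 0$ holds by hypothesis on the regions.

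Existence of a suitable $M$ is trivial: take $M=I_{Nm}$. More economically, let $M$ be any basis matrix of $\mathrm{span}\bigl(\cup_j\mathcal S_{\mathcal R_j}\bigr)$. Such a basis has full column rank, so Assumption~\ref{ass:blocking} holds; if the span is $\{0\}$, I take any single nonzero column instead.

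Now let $M$ be any matrix with $\cup_j\mathcal S_{\mathcal R_j}\subseteq\im M$. Then $\Theta_0^\star=M\Phi$ for some $\Phi$; full column rank even makes $\Phi=M^\dagger\Theta_0^\star$ unique. Hence $(\Phi,U_0(x_0))$ is feasible for \PROB{2}$(M)$, so $V_2(M;x_0)\le F_u(U_0(x_0);x_0)=V_0(x_0)$. This mirrors item 4 of Theorem~\ref{thm:ordering_MB}, but replaces $\Theta_0$ by the reconstructed $\Theta_0^\star$. That replacement is exactly what removes the need for the stationarity hypothesis $\nabla_U F_l(U_0;\Theta_0,x_0)=0$.

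\textbf{Part (2).} On $\mathcal R_j$, substituting the affine law $U_0(x_0)=A_jx_0+b_j$ into \eqref{eq:Thetastar} gives
\[
\Theta^\star(U_0(x_0);x_0)=\bar\Gamma^{-1}\bigl((\bar B^\top\bar Q\bar B+\bar R)A_j+\bar B^\top\bar Q\bar A\bigr)x_0+\bar\Gamma^{-1}(\bar B^\top\bar Q\bar B+\bar R)b_j=:C_jx_0+d_j .
\]
Therefore $\mathcal S_{\mathcal R_j}\subseteq\im[C_j,\,d_j]$, a subspace of dimension at most $n+1$.

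Set $W:=[C_1,d_1,\dots,C_J,d_J]\in\R^{Nm\times J(n+1)}$ and let $M$ be an orthonormal basis of $\im W$, e.g.\ from an SVD or \texttt{orth}. Then $\cup_j\mathcal S_{\mathcal R_j}\subseteq\im M$, Assumption~\ref{ass:blocking} holds, and $p=\rank M=\rank W\le J(n+1)$. Part (1) then gives the value bound on $\cup_j\mathcal R_j$.

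\textbf{Main obstacle.} There is no real difficulty here; the key point is the definitional check in part (1) that $\Theta^\star(U_0;x_0)$, rather than $\Theta_0$, makes $U_0$ feasible for \PROB{2}$(M)$. The only care needed is to handle degenerate cases, namely an empty region or a zero span, so that $M$ still has $p\ge1$ and full column rank.
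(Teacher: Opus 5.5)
Your proof is correct and follows essentially the same route as the paper's. For part (1), you check that $(\Phi,U_0(x_0))$ with $M\Phi=\Theta^\star(U_0(x_0);x_0)$ is feasible for \PROB{2}$(M)$, using $M=I_{Nm}$ for existence. For part (2), you use the per-region affine dependence of $\Theta^\star(U_0(x_0);x_0)$ on $x_0$, so each $\mathcal S_{\mathcal R_j}$ spans at most $n+1$ dimensions and the union at most $J(n+1)$. Your explicit $C_j,d_j$ and your handling of the zero-span case (so that Assumption~\ref{ass:blocking} holds) only add detail to the paper's argument.
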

    \begin{proof} The proof follows from the piecewise affine nature of the control law of \PROB{0}. The full details are provided in Appendix~IV~\cite{Moriyasu2026bmpc}.
    \end{proof}
    In practice, since the spans usually overlap across regions, the required $p$ can be much smaller than $J(n+1)$.
    The actual $M$ with the minimal rank can be obtained by Algorithm~\ref{alg:construct_M_from_P0}, which ensures $V_2(M)\le V_0$ on the sampled \emph{points}.
    To ensure it in the \emph{regions} that include sampled points, one needs $n+1$ affine independent samples from each region.
    
    \begin{algorithm}[t]
    \caption{Construction of $M$ from \PROB{0} data}
    \label{alg:construct_M_from_P0}
    \begin{algorithmic}[1]
        \REQUIRE Samples $\{x_0^{(i)}\}_{i=1}^{N_s}\subset\mathcal \cup_{j=1}^J\mathcal R_j $, tolerance $\texttt{tol}>0$.
        \ENSURE A matrix $M$ satisfying $V_2(M)\le V_0$ on the sampled points.
        
        \STATE For each sample $x_0^{(i)}$:
        \STATE \hspace{0.8em} Run \PROB{0} to obtain $U_0^{(i)}:=U_0(x_0^{(i)})$.
        \STATE \hspace{0.8em} Compute $\Theta^{(i)}:=\Theta^\star(U_0^{(i)};x_0^{(i)})$.
        \STATE Form $\mathbf\Theta := [\,\Theta^{(1)}\ \Theta^{(2)}\ \cdots\, \Theta^{(N_s)}\,]$.
        \STATE Get $M$ by \texttt{M = orth($\mathbf\Theta$,tol)} (MATLAB).
    \end{algorithmic}
    \end{algorithm}
    
\subsection{Degradation Bound}

    We provide inequalities that bound the performance degradation due to the simplification of the problem by considering a generic constrained optimization problem:
    \begin{proposition}[Optimal-value gap for generic optimization problem]\label{thm:cert_nested_set}
        Consider an optimization problem:
        \begin{align} \label{eq:general_opt}
            \min_{w\in \R^q}\ f(w) \   \text{s.t.}\   g(w)\le 0. 
        \end{align}
        Suppose that $f$ is $C^1$ and $H$-strongly convex\footnote{Satisfies $f(v) \ge f(w)+\nabla f(w)^\top (v-w)+\frac12\|v-w\|_{H}^2 \ \forall w,v \in \R^q$.} for some $H\in\PD^q$; and $g$ is $C^1$ and convex.
        Let $V$ be the optimal value and $\mathcal{I}:=\{i\mid g_i(w)=0\}$ be the active set, then
        \begin{align}\label{eq:gap_general}
         0 & \le f(w) - V \\
           & \le \tfrac12 \min_{\mu\ge 0} \norm{ \nabla f(w) + \nabla g_{\mathcal{I}}(w) \mu }^2_{H^{-1}} =: \Delta(w),\nonumber
        \end{align}
        holds for any $w\in C:= \Set{ w | g(w)\leq 0 }$ satisfying Mangasarian--Fromovitz constraint qualification (MFCQ)\footnote{There exists $d$ such that $\nabla g_\mathcal{I}(w)^\top d < 0$. Linear independent constraint qualification (LICQ) or Slater condition is sufficient for MFCQ.}.
    \end{proposition}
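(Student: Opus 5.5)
The lower bound $0\le f(w)-V$ is immediate, since $w\in C$ is feasible and $V$ is the infimum of $f$ over $C$. The whole effort goes into the upper bound. The plan is to use strong convexity of $f$ together with a linearization of the active constraints at $w$, and then to optimize a quadratic lower model over a polyhedral outer approximation of $C$. Its dual is exactly the minimization defining $\Delta(w)$.

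\textbf{Step 1: lower model.} Let $v$ be any feasible point. By $H$-strong convexity,
\[
f(v)\ge f(w)+\nabla f(w)^\top d+\tfrac12\|d\|_H^2, \qquad d:=v-w.
\]
By convexity of $g_i$, for each active $i\in\mathcal I$,
\[
0\ge g_i(v)\ge g_i(w)+\nabla g_i(w)^\top d=\nabla g_i(w)^\top d.
\]
Hence every feasible $v$ gives a direction $d$ in the polyhedral cone $K:=\{d\mid \nabla g_{\mathcal I}(w)^\top d\le 0\}$. It follows that
\[
V\ge f(w)+\inf_{d\in K}\Bigl\{\nabla f(w)^\top d+\tfrac12\|d\|_H^2\Bigr\},
\]
and so
\[
f(w)-V\le -\inf_{d\in K}\{\cdots\}.
\]

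\textbf{Step 2: duality for the inner QP.} The inner problem is a strongly convex QP with linear constraints, so strong duality holds because the constraints are affine. Its Lagrangian is
\[
\nabla f^\top d+\tfrac12\|d\|_H^2+\mu^\top\nabla g_{\mathcal I}^\top d, \qquad \mu\ge 0.
\]
Minimizing over $d$ gives $d=-H^{-1}(\nabla f+\nabla g_{\mathcal I}\mu)$, and the resulting dual value is $-\tfrac12\|\nabla f+\nabla g_{\mathcal I}\mu\|_{H^{-1}}^2$. Strong duality then gives
\[
\inf_{d\in K}\{\cdots\}=\max_{\mu\ge0}\Bigl(-\tfrac12\|\cdot\|_{H^{-1}}^2\Bigr)=-\Delta(w).
\]
Since this is a linearly constrained convex QP that is feasible ($d=0$) and bounded below, the optimum is attained and there is no duality gap. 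This yields $f(w)-V\le\Delta(w)$. Alternatively, one can check weak duality directly for any $\mu\ge0$, which suffices for the inequality, and argue attainment of the min over $\mu$ via closedness of the finitely generated cone $\{\nabla g_{\mathcal I}\mu\}$.

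\textbf{Main obstacle: the role of MFCQ.} The delicate point is justifying that the minimum over $\mu$ exists and is the right object. Given the convex linearization above, the inequality itself does not seem to need MFCQ; attainment already follows from closedness of a finitely generated cone. I expect MFCQ is invoked to guarantee that the bound is meaningful, and possibly in the authors' route via KKT multipliers, where one writes $V\ge f(w)+\nabla f^\top d^*+\dots$ with $d^*=w^*-w$ and uses that MFCQ ensures existence of multipliers. MFCQ also makes the linearized cone coincide with the tangent cone of $C$ at $w$, so that $\Delta(w)=0$ exactly at the optimum. I would therefore present the duality argument above, which is valid as is. I would then remark that under MFCQ, $\Delta(w)=0$ iff $w$ is a KKT point, i.e., the optimum, so the certificate is tight.
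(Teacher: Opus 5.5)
Your proof is correct. Its skeleton matches the paper's: a strong-convexity lower model, restriction of $d$ to a polyhedral cone, and the Lagrange dual of the resulting QP. The difference is in how the cone is justified. The paper takes $d=w^\star-w$ with $w^\star$ the minimizer and enlarges $C-\{w\}$ to the tangent cone $T_C(w)$. It then invokes MFCQ to identify $T_C(w)$ with $\{d\mid \nabla g_{\mathcal I}(w)^\top d\le 0\}$.

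You instead apply the gradient inequality to the convex $g_i$ and obtain $C-\{w\}\subseteq K$ directly. This needs no minimizer and no constraint qualification. It is exactly the argument the paper itself uses for $\Delta_\ep$ in Appendix~\ref{app:tighterbound}.

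Your route therefore shows that MFCQ is superfluous for the inequality. Even the paper's route only needs the always-valid inclusion $T_C(w)\subseteq K$ for $C^1$ constraints, not equality. You also justify attainment of the minimum over $\mu$ via closedness of a finitely generated cone, which the paper takes for granted.

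What MFCQ does buy, as you say, is tightness: it guarantees KKT multipliers at the optimum, so $\Delta(w)=0$ there. One small precision: $\Delta(w)=0$ is equivalent to $w$ being a KKT point with no CQ at all, and KKT points of this convex problem are always optimal. MFCQ is needed only for the converse direction, optimal $\Rightarrow$ KKT.
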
    
    \begin{proof}
 Trivially, we have $f(w) - V \geq 0$.
        Let $v = w^* = \arg\min_{w\in C} f(w)$, then the condition $f(v) \ge f(w)+\nabla f(w)^\top (v-w)+\frac12 \norm{v-w}_{H}^2 \ \forall w,v \in \R^q$ yields $f(w) - V \leq - (\frac{1}{2} \norm{d}_H^2 + \nabla f(w)^\top d) \ \forall w \in C$, where $d:= w^* - w$.
        Introducing the tangent cone $T_C(w) \ (\supseteq C - \{w\} \ni d)$ yields
        \begin{align}
            0 & \geq V- f(w)
                \geq \inf_{d \in C-\{w\}} \left\{\tfrac12 \norm{d}_H^2 + \nabla f(w)^\top d \right\}\nonumber\\ 
              & \geq \inf_{d \in T_C(w)} \left\{\tfrac12 \norm{d}_H^2 + \nabla f(w)^\top d\right\}. \label{eq:etaineq}
        \end{align}
        Under MFCQ at $w$, $T_C(w) = \Set{ d \in \R^q | \nabla g_{\mathcal{I}}(w)^\top d \le 0 }$.
        Therefore, the right-hand side reduces to the convex QP
        \begin{equation*}
            \min_{d}\ \tfrac12 \norm{d}^2_H + \nabla f(w)^\top d \ \text{s.t.} \ \nabla g_{\mathcal{I}}(w)^\top \,d \le 0 ,
        \end{equation*}
        and it is lower-bounded by its Lagrange dual:
        \[
         \max_{\mu\ge 0} -\tfrac12 \norm{ \nabla f(w) + \nabla g_\mathcal{I}(w) \mu }^2_{H^{-1}} = -\Delta(w),
        \]
        which proves the claim, combining it with (\ref{eq:etaineq}).
    \end{proof}
    \noindent
    The computation for evaluating $\Delta(w)$ is inexpensive because the problem reduces to a nonnegative least-squares problem with dimension $|\mathcal{I}|$, which is usually much smaller than $q$.
    
    \begin{remark}[Tighter bound]\label{rem:tighten_near_active}
        The bound $\Delta(w)$ can be conservative when only a few constraints are active (e.g., under coarse blocking such as $M_1$).
        To evaluate a tighter bound, one can use
        \begin{align*}
            \Delta_\ep(w)
            & :=\min_{\mu\ge 0}
            \PAR{ \tfrac12\norm{ \nabla f(w)\!+\!\nabla g_{\mathcal J}(w)\mu }^2_{H^{-1}} \!-\! g_{\mathcal J}(w)^\top \mu},\\
            {\mathcal J}(w;\ep) &:= \Set{ i | g_i(w) \ge - \ep } \ (\ep>0),
        \end{align*}
        instead of $\Delta(w)$.
        Its detail is provided in Appendix~V~\cite{Moriyasu2026bmpc}.
    \end{remark}

    Utilizing this, we provide the upper-bounds of performance degradation.
    
    \begin{theorem}[\emph{A posteriori} bounds for degradation]\label{thm:cert_nested_U}
        For a fixed $x_0$, suppose that Assumptions~\ref{ass:plant}, \ref{ass:nonsing}, and \ref{ass:upper} hold; $F_u(\cdot;x_0),\, \allowbreak G_u(\cdot;x_0)$ are $C^1$.
        Let $M, \hat M$ satisfy Assumption~\ref{ass:blocking} and $\im M \subseteq \im \hat M$; and $T$ satisfies $M = \hat M T$.
        Set (\ref{eq:general_opt}) as: 
        \begin{align}\label{eq:set_for_P2gap}
            f(w) &= F_u(U^\star( \hat M w ;x_0);x_0), \\
            g(w) &= G(U^\star( \hat M w;x_0);x_0), \ w = \hat \Phi, \nonumber 
        \end{align}
        where $U^\star(\Theta;x_0) := \bigl( \bar B^\top \bar Q\bar B + \bar R \bigr)^{-1} ( \bar \Gamma \Theta -  \bar B^\top \bar Q \bar A x_0),\,\allowbreak G(U;x_0):=[G_l(U;x_0); G_u(U;x_0)]$.
        \begin{enumerate}[\labelindent=0pt]
            \item If \PROB{2}($M$) is feasible and satisfies MFCQ at its optimal solution $\Phi^\star$, then the following holds.
                \[
                    0 \le V_2 (M) - V_2 (\hat M) \le \Delta (T \Phi^\star).
                \]
            \item Consider $\hat M=I_{Nm}$ (therefore $T=M$). 
            If \PROB{0} is feasible and its solution $U_0(x_0)$ satisfies $G_u(U_0;x_0)\leq 0$ and MFCQ for \eqref{eq:set_for_P2gap} at $w = \Theta^\star(U_0;x_0)$ (see \eqref{eq:Thetastar}), then the following holds.
                \[
                    0 \le V_0 - V_2 \le \Delta \PAR{\Theta^\star(U_0(x_0);x_0)}.
                \]
        \end{enumerate}
    \end{theorem}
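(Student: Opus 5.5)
The plan is to recognize \PROB{2}$(\hat M)$, after eliminating $U$, as an instance of the generic problem \eqref{eq:general_opt}. Then Proposition~\ref{thm:cert_nested_set} is applied at the feasible point $w=T\Phi^\star$ (part 1) or $w=\Theta^\star(U_0;x_0)$ (part 2).

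\textbf{Step 1: reduce \PROB{2}$(\hat M)$ to \eqref{eq:general_opt}.} Since $\bar B^\top\bar Q\bar B+\bar R\in\PD^{Nm}$, the stationarity condition \eqref{eq:grad} with $\Theta=\hat M w$ gives $U=U^\star(\hat M w;x_0)$ uniquely. Hence \PROB{2}$(\hat M)$ is equivalent to minimizing $f(w)$ subject to $g(w)\le0$ as in \eqref{eq:set_for_P2gap}, and its optimal value is exactly $V_2(\hat M)$.

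\textbf{Step 2: check the hypotheses of Proposition~\ref{thm:cert_nested_set}.} The map $w\mapsto U^\star(\hat M w;x_0)$ is affine with linear part $L:=(\bar B^\top\bar Q\bar B+\bar R)^{-1}\bar\Gamma\hat M$.
\begin{itemize}
\item $L$ is injective, because $\bar\Gamma$ is nonsingular by Assumption~\ref{ass:nonsing} and $\hat M$ has full column rank by Assumption~\ref{ass:blocking}.
\item Composing the strongly convex $C^1$ function $F_u$ with an injective affine map gives a $C^1$, $H$-strongly convex $f$ for some $H\in\PD^{\hat p}$. For example, if $F_u$ is $H_u$-strongly convex, one can take $H=L^\top H_u L$.
\item $g$ is $C^1$ and convex, being a convex $C^1$ map composed with an affine map. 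This uses that $G_l$ is linear and $G_u$ is convex and $C^1$.
\end{itemize}
The main subtlety I expect is this step: making precise which $H$ enters $\Delta$, and confirming that strong convexity in the paper's $H$-weighted sense survives the composition. The injectivity of $L$ is exactly what is needed for that.

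\textbf{Step 3: part 1.} Let $(\Phi^\star,U)$ be optimal for \PROB{2}$(M)$. Since $M\Phi^\star=\hat M T\Phi^\star$, the point $w=T\Phi^\star$ produces the same $U$. It is therefore feasible for the reduced problem, and $f(T\Phi^\star)=V_2(M)$. The MFCQ hypothesis is assumed at this point, with respect to \eqref{eq:set_for_P2gap}. Proposition~\ref{thm:cert_nested_set} then gives
\[
0\le V_2(M)-V_2(\hat M)\le\Delta(T\Phi^\star),
\]
and the lower bound also agrees with Theorem~\ref{thm:ordering_MB}(2).

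\textbf{Step 4: part 2.} Take $\hat M=I_{Nm}$, so that $V_2(\hat M)=V_2$. Set $w=\Theta^\star(U_0;x_0)$, which is well defined by \eqref{eq:Thetastar}.
\begin{itemize}
\item By construction, $U^\star(w;x_0)=U_0$, because \eqref{eq:Thetastar} and $U^\star$ invert each other through \eqref{eq:grad}.
\item $G_l(U_0;x_0)\le0$ holds because $U_0$ solves \PROB{L}, and $G_u(U_0;x_0)\le0$ holds by hypothesis. So $w$ is feasible.
\item $f(w)=F_u(U_0;x_0)=V_0$.
\end{itemize}
Applying Proposition~\ref{thm:cert_nested_set} with the assumed MFCQ yields $0\le V_0-V_2\le\Delta(\Theta^\star(U_0;x_0))$.
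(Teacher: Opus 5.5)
Your Steps 1, 2 and 4 follow the paper's proof, and your $H=L^\top H_u L$ is the paper's $\hat H=W^\top\bar H W$. Part 2 is fine, because there the theorem states MFCQ directly for \eqref{eq:set_for_P2gap} at $w=\Theta^\star(U_0;x_0)$.

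The gap is in Step 3. You write that ``the MFCQ hypothesis is assumed at this point, with respect to \eqref{eq:set_for_P2gap}'', but part 1 does not assume that. Its hypothesis is MFCQ for $\mathbf{P_2}(M)$ at $\Phi^\star$. That is a condition on the constraint map $g^{M}(\Phi):=G(U^\star(M\Phi;x_0);x_0)$ in the variable $\Phi$. Proposition~\ref{thm:cert_nested_set}, applied to the $\hat M$-problem, instead needs MFCQ for $g(w)=G(U^\star(\hat M w;x_0);x_0)$ at $w=T\Phi^\star$. These are conditions on different problems in different spaces, so the hypothesis has to be transferred before the proposition can be invoked. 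The paper's proof carries out this transfer explicitly.

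The transfer is a chain-rule argument. Since $M=\hat M T$, we have $g^{M}(\Phi)=g(T\Phi)$ for all $\Phi$. Hence the active set of $g^{M}$ at $\Phi^\star$ equals the active set $\mathcal I$ of $g$ at $T\Phi^\star$, and $\nabla g^{M}_{\mathcal I}(\Phi^\star)=T^\top\nabla g_{\mathcal I}(T\Phi^\star)$. So if $d$ is an MFCQ direction for $\mathbf{P_2}(M)$ at $\Phi^\star$, then
\[
\nabla g_{\mathcal I}(T\Phi^\star)^\top (Td)=\nabla g^{M}_{\mathcal I}(\Phi^\star)^\top d<0,
\]
and $Td$ is an MFCQ direction for \eqref{eq:set_for_P2gap} at $T\Phi^\star$. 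The implication only runs from the restricted problem to the larger one, not conversely. That is why it suffices to check MFCQ on the cheap problem $\mathbf{P_2}(M)$ that is actually solved. With this step inserted, your Step 3 goes through and the argument coincides with the paper's.
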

    \begin{proof} The proof follows from applying Proposition~\ref{thm:cert_nested_set}. The details are provided in Appendix~VI~\cite{Moriyasu2026bmpc}.
    \end{proof}
    
    The bounds in Theorem~\ref{thm:cert_nested_U} are designed for one-sided evaluation: it suffices to solve only the ``efficient'' tractable problem (\PROB{2}($M$) and \PROB{0} for cases (1) and (2) of Theorem~\ref{thm:cert_nested_U}, respectively) and then evaluate $\Delta(\cdot)$ at the resulting candidate point. 
    Importantly, the ``expensive'' counterpart (\PROB{2}($\hat M$) and \PROB{2} for cases (1) and (2) of Theorem~\ref{thm:cert_nested_U}, respectively) does \emph{not} need to be solved to certify a performance gap.
    
    After computing the ``efficient'' problem, the inequality provides a quantitative limit on the maximum performance improvement achievable by considering the transient behavior (case (2)) or enlarging the blocking subspace $\im(M)\subseteq\im(\hat M)$ (case (1)). 
    This enables a principled ``stop-or-refine'' decision: if the bound is already small, increasing degrees of freedom or switching to the heavier formulation is guaranteed to yield only marginal improvement.
    
    Finally, computing $\Delta(\cdot)$ is inexpensive. 
    It amounts to solving a nonnegative least-squares problem in the multipliers associated with the active inequalities at the evaluation point, which is typically much smaller than the original problem. 
    Hence, the proposed bounds act as lightweight, practically computable certificates with negligible overhead.
    \rev{Note that the bounds depend on the initial state $x_0$, and that move-blocking generally does not preserve stability. These are important challenges for future research.}

\section{Numerical Evaluation}\label{sec:num}

Here, we present a toy example to verify the validity of the theory with minimal complexity. An example demonstrating the applicability to realistic problems, using a quadrotor as the target, is provided in Appendix~VII~\cite{Moriyasu2026bmpc}.

\subsection{Settings}
\subsubsection{Target system}
    We compare the solutions of problems \PROB{0}--\PROB{3} and their variants for a simple linear system:
    \begin{align*}
        x_{k+1} &= A x_k + B u_k, \ x = \mat{p;v}\in \R^2, \ u \in \R, \\
        A &= \mat{1 & T_s\\ 0 & 1}, \ B = \mat{ \frac{1}{2} T_s^2 \\ T_s }, \ x_0= \mat{-1\\0},
    \end{align*}
    where $T_s=0.3~$s is the sampling period, $p$ is position, and $v$ is velocity. 
    This satisfies Assumption~\ref{ass:plant} ($\rank (S)=2$).
    
\subsubsection{Lower-level MPC}
    The lower-level tracking MPC \PROB{L} (\ref{eq:lower}) is constructed according to Example~\ref{ex:example1} with $(Q,P,R)=(I,I,0.1\,I),\, N = 10$, which makes Assumption~\ref{ass:nonsing} hold.
    Although $I-A$ is not invertible for this system, its steady-state can be parameterized by $[\bar x; \bar u] = [Z_x; Z_u] \theta, \, Z_x = [1;0], \, Z_u = 0$, implying $\theta$ means the steady position.
    We set the input box constraint $-{\bf 1}_N\le U\le{\bf 1}_N$ encoded via $G_l(U;x_0)\le 0$, which is the stack of stage-wise constraints $g_l(x_k,u_k)=[u_k-1;-u_k-1]\leq 0$.
    Due to strong convexity, this problem has a unique solution for any $x_0$.

\subsubsection{Upper-level concept}
    As an upper-level goal, we assume a fixed target state $\check x = [1;0]$ and input $\check u = 0$, and the state (position) bound $p \leq 0$ encoded as $g_u(x,u) \leq 0$.
    Therefore, the target state is outside the feasible set.
    Under this position constraint and the lower-level input constraint, we aim to minimize the upper-level cost: 
    \[
        \ell(x,u) := \tfrac12 \norm{x - \check x}_Q^2 + \tfrac12 \norm{u - \check u}_R^2.
    \]
    Here, we used the same $Q$ and $R$ as the lower-level, for simplicity.

\subsubsection{Hierarchical MPC}
    Before considering a bilevel MPC, we construct HMPC \PROB{0} \eqref{eq:P0} as a baseline, using the above $\ell, \, g_l$ and $g_u$.
    For this problem setup, the solution of the steady-state optimization is $\theta^*=0$, meaning $[\bar x; \bar u] = [0;0]$.

\subsubsection{Bilevel and centralized MPCs}
    We define a bilevel MPC \PROB{1} \eqref{eq:P1}, its reduction \PROB{2} \eqref{eq:P2}, their move-blocked variants \PROB{1}($M$) \eqref{eq:P1M} and \PROB{2}($M$) \eqref{eq:P2M}, and the corresponding centralized MPC \PROB{3} \eqref{eq:P3}, using:
    \begin{align*}
        F_u(U;x_0) &= V_f(x_N) + \sum_{k=0}^{N-1} \ell(x_k,u_k), \\
        G_u(U;x_0) &= \mat{ g_u(x_0,u_0) \\ \vdots \\ g_u(x_{N-1},u_{N-1}) }, \
        V_f(x) := \tfrac12 \norm{ x - \check x }^2_{P},
    \end{align*}
    which satisfies Assumption~\ref{ass:upper} for the initial state $x_0$ shown above.
    To examine the effect of move-blocking, we use
    \begin{align}\label{eq:Mi}
        M_i := \mat{I_i\\ \mat{O ,\mathbf{1}_{N-i}}} \otimes I_m \in \R^{Nm\times im} \ (i\in[1,N]),
    \end{align}
    that satisfies Assumption~\ref{ass:blocking} and makes the first $i$ steps in the horizon free and the last $(N\!-\!i)$ steps blocked.
    This also satisfies $\im M_i \subseteq \im M_{i+1}$ for any $i\in[1,N\!-\!1]$ and corresponds to $M_1$ in \eqref{eq:M1} in case $i=1$.
    In addition, $i=N$ (in this case, $i=10$) means that the problem has full degrees of freedom, i.e., \PROB{1}($M_N$)/\PROB{2}($M_N$) are equivalent to \PROB{1}/\PROB{2}, respectively.
    A low-rank move-blocking matrix $M^\star$ is also prepared by generating 50 initial states around $x_0$ and giving it to Algorithm~\ref{alg:construct_M_from_P0}.
    The obtained $M^\star$ had $\rank (M^\star)=2$.

\subsubsection{Numerical methods}
    \PROB{1} is converted into an equivalent single-level problem:
    \begin{align*}
    &\min_{\Theta}\ F_u(U^*(\Theta;x_0);x_0) \ \text{s.t.} \ G_u(U^*(\Theta;x_0);x_0)\leq 0,\text{where} \\
    & \ U^*(\Theta;x_0) := \arg\min_{V} \Set{ F_l(V;\Theta,x_0) | G_l(V;x_0)\le 0 },
    \end{align*}
    and is solved by SQP solver (MATLAB \texttt{fmincon}) using implicit differentiation to obtain $\nabla_{\Theta} U^*$.
    \PROB{1}($M$) is also converted in the same manner.
    \PROB{0}, \PROB{2}, \PROB{2}($M$), and \PROB{3} are solved by QP solver (MATLAB \texttt{quadprog}).
    
    \begin{figure}[t]
        \centering
        {
        \includegraphics[trim={40 0 50 0}, clip, scale=0.5]{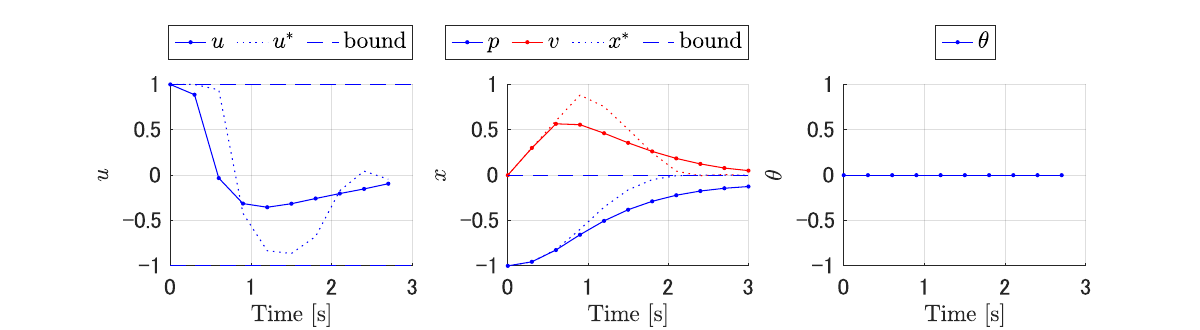}
        \subcaption{\PROB{0} \label{fig:P0}}
        \includegraphics[trim={40 0 50 0}, clip, scale=0.5]{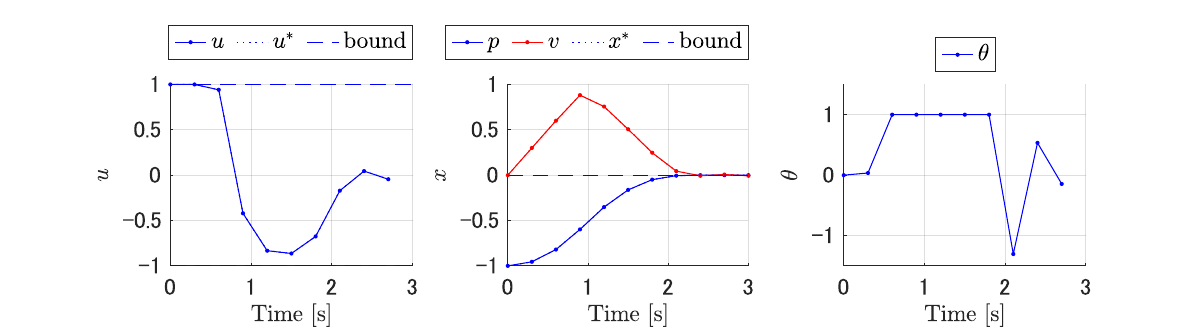}
        \subcaption{\PROB{1} with initial solution $\Theta = 0$.\label{fig:P1-0}}
        \includegraphics[trim={40 0 50 0}, clip, scale=0.5]{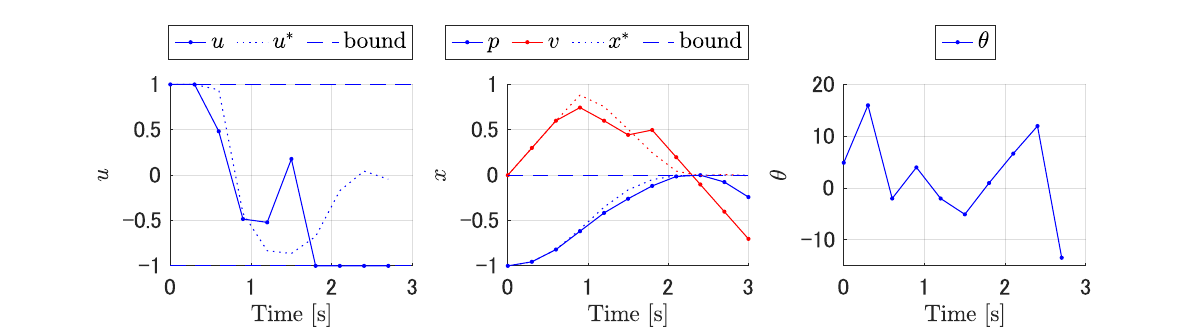}
        \subcaption{\PROB{1} with random initial solution.\label{fig:P1-rand}}
        \includegraphics[trim={40 0 50 0}, clip, scale=0.5]{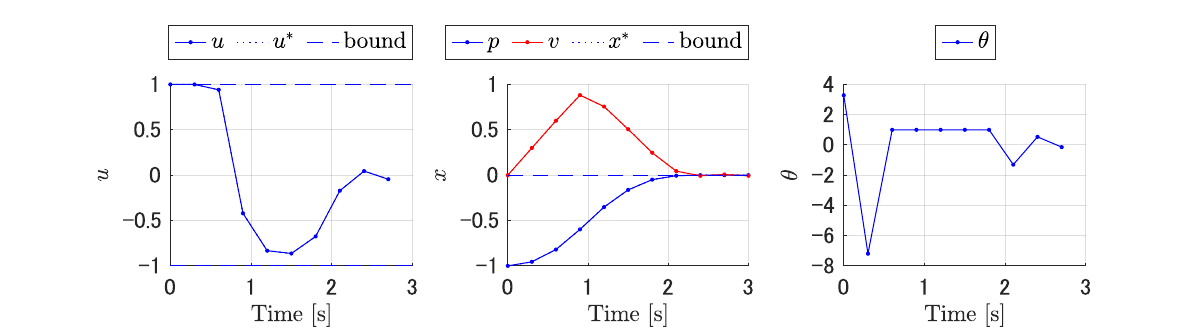}
        \subcaption{\PROB{2} \label{fig:P2}}
        }
        \caption{Numerical comparison of \PROB{0}, \PROB{1}, and \PROB{2}. \PROB{2} reliably reproduces the centralized solution \PROB{3} (dotted lines), whereas \PROB{1} does so only under favorable initialization, highlighting the practical robustness of the proposed reduction.\label{fig:toysol}}
    \end{figure}
    
\subsection{Results}
    
    Fig.~\ref{fig:toysol} shows the numerical solutions for \PROB{0}, \PROB{1}, and \PROB{2}.
    Fig.~\ref{fig:P0} and \ref{fig:P2} are for \PROB{0} and \PROB{2}, respectively, and Fig.~\ref{fig:P1-0} and \ref{fig:P1-rand} are for \PROB{1} obtained with (b) $\Theta=0$ initialization and (c) random $\Theta$ initialization.
    In each figure, colors distinguish the elements of the state (blue: $p$, red: $v$); the results for each problem are shown in solid lines; the optimal sequence of \PROB{3} (denoted as $u^*,\, x^*$) is shown in dotted lines as a baseline; the upper and lower bounds are shown in dashed lines.
    From these results, we can confirm:
    \begin{itemize}[\labelindent=0pt]
        \item \PROB{0} (Fig.~\ref{fig:P0}) provided the sequence obtained from \PROB{L} with $\Theta=0$, which was different from that of \PROB{3}, showing that the result was not optimal in the measure of $F_u$.
        \item \PROB{1} with $\Theta=0$ initialization (Fig.~\ref{fig:P1-0}) and \PROB{2} (Fig.~\ref{fig:P2}) provided the optimal sequence that coincided with the result of \PROB{3}. 
        Both had different sequences in $\Theta$, implying there were multiple (actually, infinite) solutions for \PROB{1}.
        \PROB{2} successfully provided one of them as a unique solution.
        \item \PROB{1} with random initialization (Fig.~\ref{fig:P1-rand}) failed to obtain the optimal sequence.
        This showed sensitivity to the choice of the initial solution due to the non-convexity of \PROB{1}, motivating the use of the proposed problem \PROB{2}. 
    \end{itemize}
  
    Fig.~\ref{fig:toyMB} summarizes the performance degradations and their estimated bounds of \PROB{0}, \PROB{1}($M_i$), and \PROB{2}($M_i$) ($i=1,\ldots,10$), compared to \PROB{2}\,($=$\,\PROB{1},\,\PROB{3}).
    In this figure, blue, green, and red circles show the performance degradations $V_2(M_i)-V_2,\, V_1(M_i)-V_2 \ (i=1,\ldots,10)$ and $V_0-V_2$, respectively; stems represent the upper bounds, i.e., blue and red ones show $\Delta(T\Phi^\star),\, \Delta(\Theta^\star)$, respectively; cross markers represent the tighter bounds, i.e., the blue and red ones show $\Delta_\ep (T\Phi^\star)$ and $\Delta_\ep (\Theta^\star)$ with $\ep = 0.5$; and the blue dashed line represents $V_2(M^\star) -V_2$. 
    Note that \PROB{1}($M$) is solved with $\Theta=0$ initialization. 
    From the figure, we can see:

    \begin{itemize}[\labelindent=0pt]
        \item The results are consistent with the orders shown in Theorem~\ref{thm:ordering_MB}, i.e., $V_0 \geq V_1(M_1), \, V_j(M_i) \geq V_j(M_{i+1}) \ (i=1,\ldots,9, \, j=1,2), \, V_2(M_i) \geq V_1(M_i) \ (i=1,\ldots,10)$.
        \item Although $V_0 \geq V_2(M_1)$ did not hold because $\nabla F_l\neq 0$ at the solution of \PROB{0}, $V_0 \geq V_2(M^\star)$ is satisfied as intended.
        \item $\Delta$ was confirmed to work as an upper-bound of degradation for \PROB{2}($M$) and \PROB{0}. It looks looser in small $i$, but $\Delta_\ep$ succeeded in tightening the bound.
    \end{itemize}

    \begin{figure}[t]
        \centering
        \includegraphics[scale=0.75,trim=5 0 0 0]{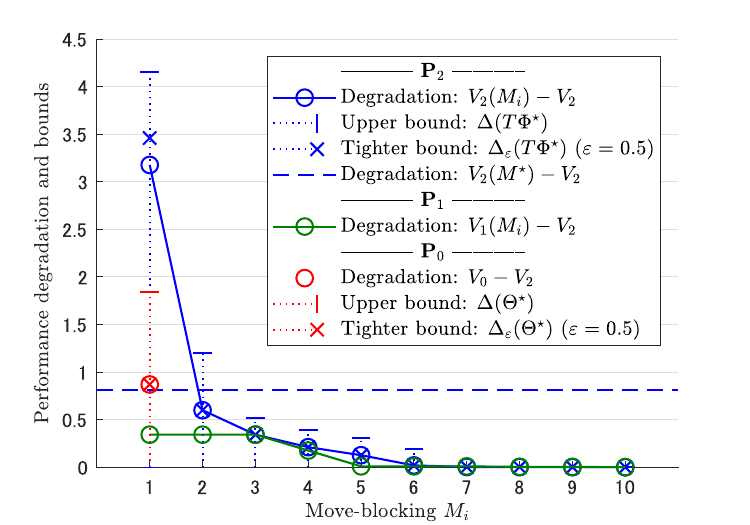}
        \caption{Performance degradation relative to \PROB{2} for move-blocked problems (blue circle: \PROB{2}($M$), green circle: \PROB{1}($M$)) and HMPC \PROB{0} (red circle), together with the upper bounds. The plot shows monotonic performance improvement as the blocking is relaxed; the upper bound (bar) captures the degradation trend well, while the tighter bound (cross) is more effective under coarse blocking; and \PROB{2}($M$) with the low-rank matrix $M^\star$ (broken line) outperforms \PROB{0}.\label{fig:toyMB}}
    \end{figure}   

\section{Conclusions}
We studied a bilevel MPC architecture in which an upper level selects steady-state/reference parameters, while a lower level implements a linear tracking MPC formulated as a strongly convex QP, assuming that the upper objective and constraints depend on the lower-level outcome only through the realized input trajectory.

We derived a smooth single-level reduction that avoids complementarity by lifting lower-level inequalities to the upper level and enforcing only stationarity. Under a verifiable nonsingularity condition, the reformulation is exact and recovers the same optimal value and realized input trajectory as the original bilevel MPC. Under standard convexity assumptions, the realized input optimizer is unique and coincides with that of the associated centralized MPC, enabling certification of closed-loop properties.

Finally, we connected HMPC and bilevel MPC via move-blocking, establishing optimal-value ordering among hierarchical, bilevel, and centralized formulations and their blocked variants. This connection provides computable certificates for blocking-induced degradation and enables a principled trade-off between performance and computational complexity. Numerical results on an integrator-containing toy example illustrate the recovery of the centralized solution and the effectiveness of certificate-guided move-blocking.

These results make bilevel control architectures, often avoided in practice due to their complexity, more practical to deploy. They also reduce reliance on ad hoc heuristics and support a more systematic design process across a broad range of applications.

\printbibliography

\appendices

\rev{
\section{Nonsingular Design Possibility}\label{app:nonsing_design}

    \begin{proposition}[Generality of nonsingularity]\label{prop:nonsing_design} \ \\
    Under Assumption~\ref{ass:plant},
    \begin{enumerate}[\labelindent=0pt]
        \item there exist $(\bar Q, \bar R) \in \PSD^{Nn} \times \PD^{Nm}$ s.t. Assumption~\ref{ass:nonsing} holds;
        \item the random choice of $(\bar Q,\bar R) \in \PSD^{Nn} \times \PD^{Nm}$ satisfies Assumption~\ref{ass:nonsing} with probability $1$.
    \end{enumerate}
    \end{proposition}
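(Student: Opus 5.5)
The plan is to reduce both claims to one fact: $\det\bar\Gamma$ is a nonzero polynomial in the entries of $(\bar Q,\bar R)$. Since $\bar\Gamma=\bar B^\top\bar Q\bar Z_x+\bar R\bar Z_u$ is linear in $(\bar Q,\bar R)$, the map $(\bar Q,\bar R)\mapsto\det\bar\Gamma$ is a polynomial on $\bbS^{Nn}\times\bbS^{Nm}$. Here $\bar Z_x=I_N\otimes Z_x$ and $\bar Z_u=I_N\otimes Z_u$, because $[\bar x_{k+1};\bar u_k]=Z\theta_k$. If this polynomial is nonzero at one point of $\PSD^{Nn}\times\PD^{Nm}$, that gives item 1. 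For item 2 I will use that the zero set of a nonzero polynomial has Lebesgue measure zero. The set $\PSD^{Nn}\times\PD^{Nm}$ has nonempty interior, so any distribution on it that is absolutely continuous with respect to Lebesgue measure avoids this zero set with probability $1$. A nonzero polynomial cannot vanish identically on a nonempty open set, so no degeneracy arises. Everything therefore rests on exhibiting one good pair.

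For the construction I will restrict to block-diagonal weights $\bar Q=I_N\otimes Q$ and $\bar R=I_N\otimes R$, as in Example~\ref{ex:example1} with $P=Q$. The matrix $\bar B^\top$ is block upper triangular with diagonal blocks $B^\top$. Hence $\bar\Gamma$ is block upper triangular with diagonal blocks $\Gamma_0:=B^\top QZ_x+RZ_u\in\R^{m\times m}$, and it suffices to find $Q\succeq0$, $R\succ0$ with $\Gamma_0$ nonsingular.

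The key structural input is a dimension count from Assumption~\ref{ass:plant}. Let $\mathcal K:=\ker Z_u$ with $r:=\dim\mathcal K$. For $v\in\mathcal K$ we get $(I-A)Z_xv=0$, and $Z_x$ is injective on $\mathcal K$ because $Z$ has full column rank. Conversely every $\bar x\in\ker(I-A)$ gives a steady pair $(\bar x,0)$. Thus $Z_x\mathcal K=\ker(I-A)$ and $r=n-\rank(I-A)$. Full row rank of $S$ gives $\rank B\ge n-\rank(I-A)=r$, so there is an $r$-dimensional subspace $\mathcal D\subseteq\R^n$ on which $B^\top$ is injective.

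I then take $Q=\alpha CC^\top$, where $C\in\R^{n\times r}$ satisfies $\im C=\mathcal D$ and $C^\top Z_x$ is injective on $\mathcal K$. Such a $C$ can be obtained, for instance, by an invertible change of coordinates pairing $\ker(I-A)$ with $\mathcal D$. With this choice $B^\top QZ_x$ maps $\mathcal K$ injectively onto the $r$-dimensional subspace $B^\top\mathcal D$. I then choose $R\succ0$ so that $R\,\im Z_u$, of dimension $m-r$, is complementary to $B^\top\mathcal D$. Consider a basis adapted to $\mathcal K$ and a complement $\mathcal K^c$. For large $\alpha$ (or by a determinant expansion in $\alpha$), $\Gamma_0$ is nonsingular, provided the dominant part is invertible: $\alpha B^\top QZ_x$ on $\mathcal K$ together with $RZ_u$ on $\mathcal K^c$. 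This holds when the two images are complementary.

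The step I expect to be the main obstacle is the choice of $R\succ0$ making $R\,\im Z_u$ complementary to a prescribed $r$-dimensional subspace $\mathcal W=B^\top\mathcal D$. A symmetric positive definite map cannot send a subspace to an arbitrary target. My first attempt is to use the remaining freedom in $\mathcal D$, and hence in $\mathcal W$, together with $R=I$, and to avoid the bad configurations $\mathcal W\cap\im Z_u\neq\{0\}$ by a generic choice. If that fails, I will instead perturb $R=I+\epsilon E$ with symmetric $E$. In that case I only need $\det\Gamma_0\not\equiv0$ as a polynomial in $(\alpha,E)$, which can be checked on symmetric (possibly indefinite) weights. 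Nonvanishing for symmetric weights suffices, because a polynomial nonzero somewhere is nonzero on a dense subset of the open cone, which recovers item 1 and, by the argument in the first paragraph, item 2.
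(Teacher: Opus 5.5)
Your reduction is correct and is the same as the paper's: $\det\bar\Gamma$ is a polynomial in $(\bar Q,\bar R)$, block-diagonal weights make $\bar\Gamma$ block upper triangular with diagonal blocks $\Gamma_0=B^\top QZ_x+RZ_u$, and item~2 follows from item~1 by the measure-zero argument on a set with nonempty interior. The gap is that you never produce a pair $(Q,R)$ with $\Gamma_0$ nonsingular, and that is the whole content of the paper's Lemma~\ref{lem:stage_nonsing}. Two pieces are missing.

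\textbf{First}, ``$C^\top Z_x$ injective on $\mathcal K$'' is not something the choice of $C$ can arrange. Every $C$ with $\im C=\mathcal D$ has $\ker C^\top=\mathcal D^\perp$, so the requirement is $\mathcal D^\perp\cap\ker(I-A)=\{0\}$. This is a condition on $\mathcal D$ itself, and it does not follow from injectivity of $B^\top$ on $\mathcal D$. Take $A=\begin{bmatrix}1&1\\0&1\end{bmatrix}$, $B=\begin{bmatrix}0\\1\end{bmatrix}$: Assumption~\ref{ass:plant} holds and $Z=c\,[1;0;0]$. The natural choice $\mathcal D=\im B$ then gives $\Gamma_0=0$ for every $\alpha$ and $R$.

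\textbf{Second}, the complementarity $R\,\im Z_u\oplus B^\top\mathcal D=\R^m$ is left open. A generic $\mathcal D$ with $R=I$ works only if the bad configurations form a proper subset, and that depends on $\dim(\im B^\top\cap\im Z_u)$, which you never compute. Your fallback to indefinite symmetric $R$ still requires exhibiting a point where $\det\Gamma_0\neq0$, so it relaxes the obstacle but does not remove it.

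Your first attempt can be completed, and the result is shorter than the paper's construction. The key facts are:
\begin{itemize}
\item $\im Z_u=\{u\mid Bu\in\im(I-A)\}$;
\item $\dim(\im B\cap\im(I-A))=\rank B-r$, by Assumption~\ref{ass:plant};
\item $B$ maps $\im B^\top$ bijectively onto $\im B$.
\end{itemize}
Together these give $\dim(\im B^\top\cap\im Z_u)=\rank B-r$. Hence $\mathcal Y:=\{y\mid B^\top y\in\im Z_u\}$ has dimension $(n-\rank B)+(\rank B-r)=n-r$, the same as $\ker(I-A)^\perp$.

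Now let $\mathcal D$ be a common complement of $\mathcal Y$ and $\ker(I-A)^\perp$, set $Q=CC^\top$ with $\im C=\mathcal D$, and take $R=I$. Suppose $\Gamma_0\theta=0$, and put $y:=CC^\top Z_x\theta\in\mathcal D$.
\begin{itemize}
\item Since $B^\top y=-Z_u\theta\in\im Z_u$, we have $y\in\mathcal D\cap\mathcal Y=\{0\}$.
\item Hence $Z_u\theta=0$, so $(I-A)Z_x\theta=0$; and $C^\top Z_x\theta=0$ because $C$ has full column rank.
\item Therefore $Z_x\theta\in\ker(I-A)\cap\mathcal D^\perp=(\ker(I-A)^\perp+\mathcal D)^\perp=\{0\}$, and $\theta=0$.
\end{itemize}

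The paper takes a different route. It chooses a common complement $\mathcal N$ of $\ker(I-A)$ and a subspace $\mathcal W\subseteq\im B$ with $\R^n=\im(I-A)\oplus\mathcal W$. It then builds $\Omega$ with $[\Omega,I_m]$ bijective on $\ker S$, and engineers a non-identity $R$ with $R\Omega=B^\top Q$, so that $[B^\top Q,R]=R[\Omega,I_m]$. Either route works, but as written your proposal stops exactly at the step that carries the proof.
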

    \begin{proof}
        1) Define the stage-wise map $\Gamma (Q,R) := B^\top Q Z_x + R Z_u \in\R^{m\times m}$ for $Q \in \PSD^n$ and $R \in \PD^m$.
        From Lemma~\ref{lem:stage_nonsing} shown below, there exist $Q \in \PSD^n$ and $R \in \PD^m$ s.t. $\Gamma (Q,R)$ is nonsingular.
        Using the block-diagonal setting in Example~\ref{ex:example1}, i.e., $\bar Q = \mathrm{blkdiag}(Q,\dots,Q,P)$ and $ \bar R = I_N\otimes R$, the matrix $\bar \Gamma (\bar Q,\bar R) = \bar B^\top \bar Q {\bar Z}_x + \bar R {\bar Z}_u$ becomes block upper-triangular with diagonal blocks $\Gamma (Q,R)$ (for stages $1,\dots,N-1$) and $\Gamma (P,R)$ (at the terminal stage).
        Choosing $Q,P,R$ s.t. $\Gamma (Q,R)$ and $\Gamma (P,R)$ are nonsingular, $\bar \Gamma (\bar Q,\bar R)$ becomes nonsingular.

        2) Consider expanding $(\bar Q,\bar R)$ to $\bbS^{Nn} \times \bbS^{Nm}$. Assumption~\ref{ass:nonsing} is satisfied for any $(\bar Q,\bar R)$ except on the hypersurface defined by $\det \Gamma(\bar Q,\bar R)=0$, which is a polynomial in the entries of $\bar Q$ and $\bar R$.
        Since this polynomial is not identically zero by item~1, its zero set has Lebesgue measure zero \cite{BasuPollackRoy2006}, i.e., the random choice of $(\bar Q,\bar R)\in \bbS^{Nn} \times \bbS^{Nm}$ satisfies Assumption~\ref{ass:nonsing} with probability $1$.
        The same result holds for $(\bar Q,\bar R) \in \PSD^{Nn} \times \PD^{Nm}$, since it has a nonempty interior in $\bbS^{Nn} \times \bbS^{Nm}$.
    \end{proof}

    \begin{lemma}\label{lem:stage_nonsing}
        Under Assumption~\ref{ass:plant}, there exist cost matrices $(Q,R) \in \PSD^n \times \PD^m$ s.t. $\Gamma (Q,R) := B^\top Q Z_x + R Z_u \in\R^{m\times m}$ is nonsingular for every basis matrix $Z=[Z_x;\, Z_u] \in \R^{(n+m)\times m}$ of $\ker S$.
    \end{lemma}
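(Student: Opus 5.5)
The plan is to reduce the lemma to a single basis and then to a square block-matrix condition. First, I would remove the quantifier ``for every basis''. Any two basis matrices of $\ker S$ are related by $Z' = Z T$ with $T\in\R^{m\times m}$ nonsingular, so $\Gamma'(Q,R) = B^\top Q Z_xT + R Z_uT = \Gamma(Q,R)\,T$. Hence nonsingularity of $\Gamma(Q,R)$ does not depend on the chosen basis, and it suffices to treat one fixed $Z$.

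Next I would rewrite nonsingularity as injectivity on $\ker S$. Suppose $\Gamma(Q,R)\theta=0$ and set $[x;u]=Z\theta$. Then $(x,u)$ satisfies $(I-A)x - Bu = 0$ and $B^\top Q x + R u = 0$. Since $Z$ has full column rank, $\theta=0$ if and only if $(x,u)=0$. So $\Gamma(Q,R)$ is nonsingular if and only if the square matrix
\[
\mathcal K(Q,R) := \mat{ I-A & -B \\ B^\top Q & R } \in \R^{(n+m)\times(n+m)}
\]
is nonsingular. Because $R\in\PD^m$, a Schur complement with respect to the $(2,2)$ block gives
\[
\det \mathcal K(Q,R) = \det R \cdot \det\bigl(I - A + B R^{-1} B^\top Q\bigr).
\]
The problem therefore becomes: find $Q\succeq 0$ and $R\succ 0$ with $I-A+BR^{-1}B^\top Q$ nonsingular. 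This form makes clear how Assumption~\ref{ass:plant} (full row rank of $[I-A,\,-B]$) must enter.

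For the concrete choice I would take $Q=I_n$ and $R=r I_m$ with a scalar $r>0$. Multiplying the Schur complement by $r$, the condition becomes $p(r):=\det\bigl(r(I-A)+BB^\top\bigr)\neq 0$. Since $p$ is a polynomial in $r$, it is enough to show that $p$ is not identically zero; it then vanishes at only finitely many $r$, and some $r>0$ works. The two extreme cases behave well. If $I-A=0$, Assumption~\ref{ass:plant} forces $B$ to have full row rank, so $BB^\top\succ 0$ and $p\not\equiv0$. If $I-A$ is nonsingular, the leading coefficient $\det(I-A)$ is nonzero.

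The main obstacle is the mixed case, where $I-A$ is singular but nonzero. There one must rule out that the pencil $r(I-A)+BB^\top$ is singular. My first attempt would be to assume the contrary, i.e.\ that $\det\bigl(r(I-A)+BB^\top\bigr)\equiv 0$, and take a polynomial kernel vector $v(r)\neq0$. Evaluating at $r=0$ gives $B^\top v(0)=0$, and comparing the next coefficients should produce a nonzero left vector $w$ with $w^\top(I-A)=0$ and $w^\top B=0$, contradicting full row rank of $S$. If this scalar family turns out to be too restrictive, I would fall back to a general $R=P^{-1}$ with $P\succ0$ and general $Q\succeq 0$. The argument would then be that the graph subspaces $\{[x;Kx]\}$ complementary to $\ker S$ form a nonempty open set, that $\det(I-A+BPB^\top Q)$ is a polynomial in $(P,Q)$, and that exhibiting a single admissible point via a feedback $K$ with $I-A+BK$ nonsingular suffices.
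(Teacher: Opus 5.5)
Your reductions are correct and match the paper's in substance: basis independence, the equivalence of nonsingularity of $\Gamma(Q,R)$ and of $\mathcal K(Q,R)$ (the paper's bijectivity of $C_{Q,R}$ on $\ker S$), and the Schur-complement condition $\det(I-A+BR^{-1}B^\top Q)\neq 0$. The gap is in producing $(Q,R)$.

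Your primary choice $(Q,R)=(I_n,rI_m)$ fails in the mixed case, even for controllable pairs. Take
\[
A=\begin{bmatrix}1&0\\-1&1\end{bmatrix},\quad B=\begin{bmatrix}1\\0\end{bmatrix},\quad I-A=\begin{bmatrix}0&0\\1&0\end{bmatrix}.
\]
Here $\rank S=2$ and $[B,\,AB]$ is nonsingular, yet $r(I-A)+BB^\top=\begin{bmatrix}1&0\\r&0\end{bmatrix}$ is singular for every $r$. You can also see this directly: $\ker S$ is spanned by $[0;1;0]$, so $\Gamma(Q,R)=Q_{12}$. This vanishes for every diagonal $Q$ and every $R$; you need off-diagonal coupling such as $Q=\begin{bmatrix}1&1\\1&1\end{bmatrix}$.

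So the coefficient-comparison argument cannot work. A right kernel vector of the pencil only yields something like $v\in\ker(I-A)\cap\ker B^\top$ (here $v=e_2$). Assumption~\ref{ass:plant} does not exclude this; it only excludes a common \emph{left} null vector, $w^\top(I-A)=0$ and $w^\top B=0$.

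The fallback has the same hole in a different place. A gain $K$ with $I-A+BK$ nonsingular does exist under Assumption~\ref{ass:plant}. But that gives a nonzero value of $K\mapsto\det(I-A+BK)$, not of the polynomial $(P,Q)\mapsto\det(I-A+BPB^\top Q)$, unless you show some such $K$ has the form $PB^\top Q$ with $P,Q$ symmetric. This is not automatic. For fixed $R$, the gains $R^{-1}B^\top Q$ with $Q\in\bbS^n$ form a subspace of dimension $nm-m(m-1)/2<nm$ when $m\ge 2$ and $B$ has full column rank. A structured family can lie entirely inside the bad hypersurface, exactly as $\{(I_n,rI_m)\}$ did above. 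Openness of the set of good graph subspaces does not rule this out.

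This realizability is the real content of the lemma, and it is what the paper proves. It picks the specific gain $\Omega=B^\dagger(\Pi|_{\mathcal W})^{-1}\Pi$, built from three subspaces:
\begin{itemize}
\item $\mathcal V=\ker(I-A)$;
\item a subspace $\mathcal W\subseteq\im B$ complementing $\im(I-A)$;
\item a common complement $\mathcal N$ of $\mathcal V$ and $\mathcal W$.
\end{itemize}
It shows that $[\Omega,I_m]$ is bijective on $\ker S$, which is your condition that $I-A+B\Omega$ is nonsingular. It then sets $Q=\Lambda_x^\top\widehat Q\Lambda_x$ and $R=(\Lambda_xB)^\top\widehat Q(\Lambda_xB)+\Lambda_u^\top\widehat R\Lambda_u$, with $\Lambda_xB\Omega=\Lambda_x$ and $\Lambda_u\Omega=O$. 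This gives $R\Omega=B^\top Q$, i.e.\ $\Omega=R^{-1}B^\top Q$.

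If you want to stay closer to your own route, you could do the following:
\begin{enumerate}
\item Fix $R=I_m$ and take an orthonormal basis $[U_1,U_2]$ with $\im U_1=\im B$. Then $BB^\top Q$ involves only the blocks $Q_{11}$ and $Q_{12}$, and $Q_{12}$ is unconstrained by symmetry.
\item Take $Q_{11}=t\Sigma^{-1}$, where $\Sigma=U_1^\top BB^\top U_1$ and $t$ is large.
\item Choose $Q_{12}$ by the feedback lemma applied to the resulting Schur complement. The full row rank it needs follows from Assumption~\ref{ass:plant}.
\item Take $Q_{22}$ large, which makes $Q\succ 0$ without changing the determinant.
\end{enumerate}
Some argument of this kind has to be supplied.
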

    
    \begin{proof}
    The nonsingularity of $\Gamma(Q,R)=[ B^\top Q,  R]Z$ reduces to the bijectivity of the basis-independent linear map
    \[
        C_{Q,R}: \ker S\ni [x;u] \mapsto [ B^\top Q,  R] [x;u] \in \R^m,
    \]
    since $\im Z = \ker S$ holds for any basis matrix $Z$.
    
    Define ${\mathcal V}\!:=\! \ker (I\!-\!A), \, l\!:=\! \dim {\mathcal V}$, then, for the case with $l\!=\!0 \, (\Leftrightarrow 1 \!\notin\! \eig A)$, $(Q,R)\!=\!(O,I_m)$ makes $C_{Q,R}$ bijective.
    Indeed, if $l\!=\!0$, then $\ker S \!=\! \im [ (I\!-\!A)^{-1}B; I_m]$ holds, and hence, $C_{Q,R} \ker S \!=\! \im [ O, I_m] [ (I\!-\!A)^{-1}B; I_m] = \im I_m = \R^m$.
    In this case, $\R^m$ is fully spanned by the term $R u$.

    The following considers the case with $l>0 \, (\Leftrightarrow 1 \!\in\! \eig A)$, where $\ker S$ contains a $l$-dimensional subspace ${\mathcal V} \times \{0\}$ that is invisible from $Ru$. 
    The main idea of the proof is to make this subspace visible from $B^\top Qx$.
    
    We define linear subspaces and mappings to show the connection among $\R^n, \R^m$ and the input-invisible space $\R^l$.
    Since $\dim \im (I-A) = n - l$ and $\im (I-A)+\im B=\R^n$ hold by Assumption~\ref{ass:plant}, there exists a subspace ${\mathcal W} \!\subseteq\! \im B$ s.t. 
    \[
        \R^n=\im (I-A)\oplus {\mathcal W}, \ l=\dim{\mathcal W} \ (\le \dim \im B \le m).
    \]
    The equal-dimensional subspaces ${\mathcal V}$ and ${\mathcal W}$ admit a common complement ${\mathcal N}$: $\R^n = {\mathcal V} \oplus {\mathcal N} = {\mathcal W} \oplus {\mathcal N}$.
    Let $\Pi:\R^n\to{\mathcal V}$ be the projection onto ${\mathcal V}$ along ${\mathcal N}$, i.e., ${\mathcal V} = \im \Pi, \, {\mathcal N} = \ker \Pi$. Then, its restriction $\Pi|_{\mathcal W}: {\mathcal W} \to {\mathcal V}$ is bijective and ${\mathcal W} = (\Pi|_{\mathcal W})^{-1} {\mathcal V}$ holds.
    From ${\mathcal W} \subseteq \im B$, the restricted Moore-Penrose inverse $B^\dagger|_{\mathcal W}: {\mathcal W} \to {\mathcal Z}:= B^\dagger{\mathcal W}$ is bijective. 
    Define $\Omega: \R^n\to {\mathcal Z}$ by their composition $\Omega := B^\dagger (\Pi|_{\mathcal W})^{-1} \Pi$. 
    In addition, from $l= \dim {\mathcal V}$, there exists a bijective map $L: \R^l \to {\mathcal V}$.
    Fig.~\ref{fig:coordinate-free-construction} summarizes the above.
    Crucially, ${\mathcal V},{\mathcal W},{\mathcal Z}$, and $\R^l$ are all connected via bijective mappings.
    
    The mapping $\Omega$ makes $[\Omega,I_m]$ bijective from $\ker S$ to $\R^m$.
    Indeed, when $[x;u]\in \ker S$ satisfies $[\Omega,I_m][x;u]=0 \ (\Leftrightarrow u = -\Omega x)$, then $(I-A)x= Bu = -B\Omega x$.
    The left-hand side belongs to $\im (I-A)$, whereas the right-hand side is on ${\mathcal W}$.
    Hence, $\R^n=\im (I-A)\oplus {\mathcal W}$ requires both sides to be zero.
    Since $B\Omega = (\Pi|_{\mathcal W})^{-1}\Pi$ and $\ker ((\Pi|_{\mathcal W})^{-1}\Pi) = \ker \Pi = {\mathcal N}$, we have $x \in \ker (I-A) \cap \ker(B\Omega) = {\mathcal V} \cap {\mathcal N}=\{0 \}$ from $\R^n={\mathcal V} \oplus {\mathcal N}$.
    Thus, $x=0, \,u= -\Omega x=0$ holds, showing the bijectivity of $[\Omega,I_m]: \ker S \to \R^m$, as $\dim \ker S= m$.
    
    The following $(Q,R)\in \PSD^n \times \PD^m$ makes $C_{Q,R}$ bijective:
    \[ 
    Q = \Lambda_x^\top \widehat Q \Lambda_x,\quad R = (\Lambda_x B)^\top \widehat Q (\Lambda_x B) + \Lambda_u^\top \widehat R \Lambda_u,
    \]
    where $(\widehat Q,\widehat R) \in \PD^l \times \PD^{m-l}$, and the maps $\Lambda_x: \R^n \rightarrow \R^l,\, \allowbreak\Lambda_u: \R^m \rightarrow \R^{m-l}$ are defined s.t. $\Lambda_x = L^\dagger \Pi, \, \ker \Lambda_u = {\mathcal Z}$.
    Here, $Q\in\PSD^n$ is trivial, and $R\in\PD^m$ follows from $u^\top R u =0 \Leftrightarrow u \in \ker (\Lambda_x B) \cap \ker\Lambda_u = \{0\}$.
    This intersection is $\{ 0 \}$, because $(\Lambda_x B)|_{\mathcal Z} = (L^\dagger \Pi B)|_{\mathcal Z}$ is bijective from ${\mathcal Z}$ to $\R^l$, and when $u\in \ker \Lambda_u = {\mathcal Z}$, then $u\in \ker (\Lambda_x B)$ yields $u=0$.
    Lastly, $\Lambda_xB\Omega=\Lambda_x,\, \Lambda_u\Omega = O$ yield $R\Omega = B^\top \Lambda_x^\top \widehat Q \Lambda_x = B^\top Q,\allowbreak [B^\top Q, R] = R[\Omega, I_m]$, showing the bijectivity of $C_{Q,R}$.
    When $l=m$, then $R$ omits $\Lambda_u^\top \widehat R \Lambda_u$, and $l\le \dim \im B \le m$ gives $\dim \im B = m$ and ${\mathcal Z}=\R^m$, which make the above discussion valid without $\Lambda_u$.
    \end{proof}
}
    \begin{figure}[t]
        \flushright
        \begin{tikzcd}
            \R^n
            \arrow[r, "\Pi", two heads]
            \arrow[d, "\Omega"', two heads]
            \arrow[rd, "B\Omega" sloped, two heads]
            &[0.3cm]
            {\mathcal V} \rlap{$ \ := \ker (I\!-\!A)$}
            \arrow[d, "(\Pi|_{\mathcal W})^{-1}", "\sim"' sloped]
            &
            |[xshift=2.5cm]| \R^l
            \arrow[l, end anchor={[xshift=2cm]}, "L"', "\sim"]
            \\
            \llap{$\R^m \supseteq B^\dagger {\mathcal W} =: \ $} {\mathcal Z}
            &
            {\mathcal W} \rlap{$ \ \subseteq \im B$}
            \arrow[l, "B^\dagger|_{\mathcal W}"', "\sim"]
            &
        \end{tikzcd}
        \caption{Relations among the spaces and mappings.}
        \label{fig:coordinate-free-construction}
    \end{figure}
\section{Proof of Corollary~\ref{cor:stability}}\label{app:equivalence}
    Let $\kappa$ denote the receding-horizon control law induced by $\mathbf{P}_3$ and suppose the closed-loop with \PROB{3}, i.e., $x_{k+1}=Ax_k+B\kappa(x_k)$, has the stated stability property.
    By Theorem~\ref{thm:equiv}, at each time step and for any initial state, the cascade connection of \PROB{1} or \PROB{2} and \PROB{L} produces the same optimizing input sequence in $U$ as \PROB{3}.
    Hence, the implemented first control move is identical to $\kappa(x_k)$.
    Therefore, the closed-loop trajectories coincide, and the claimed stability property is inherited.
    
\section{Convergence Design for MPCs}\label{app:convergence_design}

    \PROB{0} decides the optimal steady state and input pair $(x^*, u^*)$, and the tracking MPC \PROB{L} can ensure convergence to the point by appropriate design.
    The bilevel cascade of \PROB{1} or \PROB{2} and \PROB{L} and its equivalent \PROB{3} can also ensure convergence to the same point under some natural settings.

    Given that the upper-level goal is to minimize $\ell$ under the constraints $g_u\le 0$ and $g_l\le 0$, \PROB{0} is the natural design for the goal when we can ignore the transient behavior.
    On the other hand, when we consider the transient, setting $F_u(U;x_0) = V_f(x_N)+\sum_{k=0}^{N-1}\ell(x_k,u_k)$ and constructing $G_u$ and $G_l$ by stacking $g_u$ and $g_l$ is also natural, and it can be seen as an extension of \PROB{0} to the transient. 
    In this setting, \PROB{3} generally reduces to an economic MPC, and the appropriate terminal cost $V_f$ ensures convergence to $(x^*,u^*)$ that minimizes $\ell(x^*,u^*)$ within the constraints.
    Extensive theories have been studied to ensure the convergence of the economic MPC~\cite{RawlingsMayneDiehl2017}.
    Under Assumptions~\ref{ass:plant}, \ref{ass:nonsing}, and \ref{ass:upper}, the cascade of \PROB{1} or \PROB{2} and \PROB{L} inherits this convergence property from Corollary~\ref{cor:stability}.
    Therefore, all of these controllers can achieve convergence to the optimal steady point of \PROB{0} by appropriate design.
    
    Under such a scenario, using $F_u$ as the measure of the \emph{transient} performance for \PROB{0} is justified.

\section{Proof of Proposition~\ref{prop:exist_lowrankM_design}}\label{app:lowrankM}
    \noindent
    1) $M=I_{Nm}$ trivially satisfies $\cup_{j=1}^J\mathcal S_{\mathcal R_j}\subseteq\mathrm{im}(M) = \R^{Nm}$.
    Under $\cup_{j=1}^J\mathcal S_{\mathcal R_j}\subseteq\mathrm{im}(M)$, for any $x_0\in\cup_{j=1}^J\mathcal R_j$, there exists a solution $\Phi = \Phi^\star \in \R^p$ for $\Theta^\star(U_0(x_0);x_0) = M \Phi$ since $\Theta^\star (U_0(x_0);x_0)\in \im M$.
    This implies $(\Phi^\star,U_0)$ is a feasible solution for \PROB{2}($M$), and therefore, $V_2(M;x_0)\le F_u(U_0;x_0) = V_0(x_0)$ holds for any $x_0\in\cup_{j=1}^J\mathcal R_j$.\vspace{5pt}\\
    2) Since $x_0 \mapsto \Theta^\star(U_0(x_0);x_0)$ is affine in each region, $\dim ( \mathrm{span}(\mathcal S_{\mathcal R_j}) ) \le n+1$ holds for each $j$. 
    This yields $\dim ( \mathrm{span} ( \cup_{j=1}^J \mathcal S_{\mathcal R_j} )) \!\le\! \sum_{j=1}^N \dim (\mathrm{span}(\mathcal S_{\mathcal R_j})) \!\leq\! J(n+1)$.
    Thus, there exists $M$ s.t. $\cup_{j=1}^J\mathcal S_{\mathcal R_j} \subseteq \mathrm{span} ( \cup_{j=1}^J \mathcal S_{\mathcal R_j} ) \subseteq \mathrm{im}(M)$ with $\dim \im M = \rank (M) \leq J(n+1)$.

\section{Tighter Bound in Proposition~\ref{thm:cert_nested_set}}\label{app:tighterbound}

    We can obtain a tighter bound by incorporating \emph{nearly-active} constraints through a slack-preserving linearization. 
    For any $\ep>0$, ${\mathcal J}(w;\ep) := \Set{ i | g_i(w) \ge - \ep }$ satisfies $\mathcal J\supseteq \mathcal I(w)$.
    The convexity of $g$ implies that for any feasible $v\in C$ and $d:=v-w$, $g_{\mathcal J}(v)\ge g_{\mathcal J}(w)+\nabla g_{\mathcal J}(w)^\top d \le 0$, and hence $C-\{w\}\ \subseteq\ \Set{ d\in\R^q | \nabla g_{\mathcal J}(w)^\top d \le -g_{\mathcal J}(w) }$.
    Replacing $T_C(w)$ with this outer approximation yields the tightened bound $0 \le f(w)-V \le \Delta_\ep (w)$, where $\Delta_\ep (w)$ can be computed via the following problem:
    \begin{equation*}
        \Delta_\ep(w)
        :=\min_{\mu\ge 0}
        \PAR{ \tfrac12\norm{ \nabla f(w)+\nabla g_{\mathcal J}(w)\mu }^2_{H^{-1}} - g_{\mathcal J}(w)^\top \mu},
    \end{equation*}
    which remains inexpensive to solve since it is still a nonnegative least-squares problem.
    This reduces to $\Delta(w)$ when $\ep = 0$ because ${\mathcal J}(w;0)= \mathcal I(w)$ and $g_{\mathcal I}(w)=0$. 
    Moreover, enlarging ${\mathcal J}$ (taking larger $\ep > 0$) yields a monotonically tighter certificate.

\section{Proof of Theorem~\ref{thm:cert_nested_U}}\label{app:degradationbound}
    \noindent
    1) Since $\nabla_U F_l(U ;\hat M \hat \Phi,x_0) = 0$ is equivalent to $U = U^\star(\hat M \hat \Phi;x_0)$ (see (\ref{eq:grad})), \PROB{2}($\hat M$) can be rewritten as:
    \begin{align*}
        \min_{\hat \Phi}\;& F_u(U^\star(\hat M \hat \Phi;x_0);x_0) \  
        \text{s.t.}\   G(U^\star(\hat M \hat \Phi;x_0);x_0)\le 0.
    \end{align*}
    This reduces to (\ref{eq:general_opt}) by setting $f,g$ as in (\ref{eq:set_for_P2gap}).
    From Assumption~\ref{ass:upper}, $F_u(\cdot;x_0)$ is strongly convex (let $\bar H\in\PD^{Nm}$ make it $\bar H$-strongly convex) and $G_u(\cdot;x_0)$ is convex.
    Together with the fact that $U^\star(\hat M \hat \Phi;x_0);x_0)$ is affine with respect to $\hat \Phi$, both $f(w),\, g(w)$ become convex.
    In particular, $f(w)$ is $\hat H$-strongly convex with $\hat H = W^\top \bar H W \in \PD^{\dim \hat \Phi}$ because $W:=\bigl( \bar B^\top \bar Q\bar B + \bar R \bigr)^{-1} \bar \Gamma \hat M \in \R^{Nm\times (\dim \hat \Phi)}$ has full column rank.
    Since \PROB{2}($M$) is feasible and $M \Phi^\star = \hat M T \Phi^\star$ holds, $\hat\Phi = T \Phi^\star$ is feasible for \PROB{2}($\hat M$).
    Define $\nabla_{(\cdot)} G^\star(\Theta) := \nabla_{(\cdot)} G_\mathcal{I}( U^\star(\Theta) )$, then MFCQ at $\Phi^\star$ for \PROB{2}($M$) implies that there exists $d$ s.t. 
    $ \nabla_\Phi   G^\star( M\Phi^\star )^\top d \allowbreak
    = \nabla_U      G^\star( M\Phi^\star )^\top \allowbreak
      \nabla_\Theta G^\star( M\Phi^\star )^\top M d M \allowbreak
    = \nabla_U      G^\star( \hat M T \Phi^\star )^\top \allowbreak
      \nabla_\Theta G^\star( \hat M T \Phi^\star )^\top \hat M T d  \allowbreak
    = \nabla_{\hat\Phi} g_\mathcal{I}(T \Phi^\star)^\top (Td)  \allowbreak < 0$, which means MFCQ at $T\Phi^\star$ for \PROB{2}($\hat M$).
    Therefore, we can apply (\ref{eq:gap_general}) with $w = T \Phi^\star$ to prove the claim.\vspace{5pt}\\
    2) \PROB{2} can be rewritten as \eqref{eq:general_opt} by setting $f,g$ as \eqref{eq:set_for_P2gap} with $\hat M = I_{Nm}, \, \hat \Phi = \Theta$.
    Under $G_u(U_0;x_0)\leq 0$, $(\Theta^\star(U_0;x_0),U_0)$ is feasible for \PROB{2}.
    Therefore, applying this $w=\Theta^\star(U_0;x_0)$ to (\ref{eq:gap_general}) leads to $0 \le V_0 - V_2 \le \Delta( \Theta^\star(U_0;x_0) )$.

\section{Quadrotor Control via Koopman EDMD}\label{app:quadrotorexample}
    
    Our method can be applied to a wide range of nonlinear systems by using linearizing techniques, e.g., Koopman extended dynamic mode decomposition (EDMD) \cite{korda2018linear}, Hammerstein--Wiener~\cite{moriyasu2021structured}, etc.
    We address the control of quadrotors as a specific example, referring to \cite{NARAYANAN2023607}.

    \subsection{Modeling and Pre-stabilization}

    We consider the following dynamics for a quadrotor:
    \begin{align*}
        \dot x_p &= x_v, \ \dot x_v = \frac{v_t}{m} x_r e_z - g e_z, \\
        \dot x_R &= x_R \hat \omega ,\ \dot x_\omega = J^{-1} ( v_m - x_\omega \times J x_\omega ),
    \end{align*}
    where $x_p\in\R^3$ is the position, $x_v\in\R^3$ is the velocity, $x_R \in \mathrm{SO}(3)\subset \R^{3\times 3}$ is the rotation matrix, $x_\omega \in \R^3$ is the angular velocity, $m = 4.34$~kg is the mass, $J=\diag{0.0820, 0.0845, 0.1377} \mathrm{~kg\cdot m^2}$ is the matrix of the moment of inertia, $g = 9.81~\mathrm{m/s^2}$ is the gravitational acceleration, $e_z = [0;0;1]$ is the unit vector, and $\hat{(\cdot)}:\R^3 \rightarrow \text{SO(3)}$ is the hat operator: $a \times b = \hat a b $ for $a,b\in \R^3$.
    It has a thrust force $v_t\in\R$ and a total moment $v_m\in\R^3$ as control inputs, both are defined in the body frame.

    An LTI model is learned from the above nonlinear dynamics via Koopman EDMD:
    \begin{align*}
        x_{k+1} &= A_0 x_k + B v_k, \ v = [u_t; \, v_m] \in \R^4, \\
        x &= [x_p; \, x_v;\, \text{vec}(x_R-I);\, x_\omega;\, \\
          & \ \ \ \ (x_R-I)e_z;\,
            \text{vec}(x_R \hat \omega);\, J^{-1} \hat{\omega} (J\omega) ] \in \R^{27},
    \end{align*}
    with time-step $\Delta t = 0.03$~s. 
    Although the learning process is omitted as it is beyond the scope, if the dynamics are learned appropriately, $A_0$ should be unstable and approximately satisfy $(x_p)_{k+1} = (x_p)_{k} + (x_v)_k \Delta t$, i.e., $A_0$ has integrators (eigenvalues 1), and equivalently, $I - A_0$ is singular.
        
    To avoid internal divergence in MPC computation due to instability, we apply a local pre-stabilizing feedback $v_k = -K x_k + u_k$ with a new input $u_k \in \R^4$, resulting in
    \begin{align} \label{eq:semistab}
        x_{k+1} &= A x_k + B u_k, \ A := A_0 - B K. 
    \end{align}
    The gain $K$ is designed to preserve the integrator modes, to prevent the stable position from being fixed by this local control\footnote{Assume $A_0$ includes $l$ integrators ($\rank (I-A_0) = n-l$), let $V\in \R^{n\times l},\,W\in \R^{n\times (n-l)}$ be basis matrices of $\ker (I-A_0)$ and $\im (I-A_0)$, respectively (so $T:=[W,V]$ is nonsingular). This yields $\tilde A_0 := T^{-1} A_0 T = [A_{11},O; A_{21}, I_l],\, \tilde B := T^{-1} B = [B_1;B_2]$, and then we can design $K_1$ to allow $A_{11} - B_1 K_1$ to have eigenvalues within the unit circle. Finally, we have $\tilde K := K T = [K_1, O]$, preserving integrators in $A = T (\tilde A_0 - \tilde B \tilde K) T^{-1} = T [A_{11} - B_1 K_1, O; A_{21} - B_2 K_1, I_l] T^{-1}$ .}.
    
    \subsection{Lower-level MPC}

    We construct the lower-level tracking MPC \PROB{L} (\ref{eq:lower}) for the pre-stabilized LTI (\ref{eq:semistab}), mainly aiming to track given position trajectories.
    The objective function $F_l(U;\Theta,x_0)$ is set as in Example~\ref{ex:example1}, and in this case, $F_l$ can be rewritten as:
    \begin{align*}
        F_l(U;\Theta,x_0) = & \ \frac{1}{2} \norm {x_N - \bar x_N}_P^2 \\
        & + \frac{1}{2} \sum_{k=0}^{N-1} ( \norm{x_k - \bar x_k}_Q^2 + \norm{u_k - \bar u_k}_R^2 ), 
    \end{align*}
    where $[ \bar x_{k+1} ; \bar u_k ] = Z \theta_k \, (k=0,\ldots,N-1), \, \bar x_0 = x_0$.
    We set $N=30$, $Q = \diag{ I_3,\, 10^{-4}I_{24} } \succ 0$, aiming to impose a larger cost on the position error, and $R= 0.01 I_4 \succ 0$.
    The terminal weight $P \succ 0$ is determined by solving the discrete-time Lyapunov equation $(A-BK_f)^\top P (A-BK_f) - P = - (Q + K_f^\top R K_f)$, where $K_f$ is the virtual gain obtained by discrete-time LQR with weights $(Q,R)$.
    Since $I-A$ is singular due to integrators in $A$, the matrix $Z = [Z_x; Z_u]$ is chosen from $\ker S \ (S=[I-A,-B])$; specifically, we take it by \texttt{Z = null(S)} in MATLAB.
    In addition, we impose the stage-wise input box constraint $-20\cdot[2;1;1;1]\le v_k\, (=-Kx_k + u_k) \le 20\cdot[2;1;1;1] \, (k=0,...,N-1)$ encoded as $g_l(x_k,u_k) \leq 0 \ (k=0,...,N-1)$, and by stacking them, we have $G_l(U;x_0) = [g_l(x_0,u_0); g_l(x_1,u_1); ...; g_l(x_{N-1}, u_{N-1}) ]\le 0$.
    The terminal set is omitted here for simplicity, so stability is not fully guaranteed; however, the terminal cost ensures local asymptotic stability around $(x,u)=(0,0)$ in the case $(\bar x_{k+1}, \bar u_k) = (0,0) \ (k=0,...,N-1)$, since that point is a steady-state in the interior of the feasible set.
    The initial state is set to $x_0=0$.
    
    \subsection{Upper-level MPCs}

    As an upper-level goal, we assume a fixed target state $\check x = [\check x_p; 0; 0; ...; 0],\,\check x_p = [0;0;1]$, target input $\check u = 0$, and the $x,z$--position bound $x_{p_x} \leq 0,\, x_{p_z} \leq 0$ encoded as $g_u(x,u) \leq 0$.
    Therefore, the target state is outside the feasible set.
    Under this position constraint and the lower-level input constraint, we aim to minimize the upper-level cost: 
    \[
        \ell(x,u) := \frac{1}{2} \PAR{ \norm{x_k - \check x}_Q^2 + \norm{u_k - \check u}_R^2 }.
    \]
    Here, we used the same $Q$ and $R$ as the lower-level, for simplicity.
    
    Before considering a bilevel MPC, we construct HMPC \PROB{0} \eqref{eq:P0} as a baseline, using $\ell(x,u), g_u(x,u), g_l(x,u)$.
    For our problem setup, the solution is $(x^*,u^*) = (0,0) \ (\theta^*=0)$.

    Next, we construct bilevel MPC \PROB{1} \eqref{eq:P1}, its reduction \PROB{2} \eqref{eq:P2}, their move-blocked variants \PROB{1}($M$) \eqref{eq:P1M} and \PROB{2}($M$) \eqref{eq:P2M}, and the corresponding centralized MPC \PROB{3} \eqref{eq:P3}.
    The objective and inequality constraint functions are set as
    \begin{align*}
        F_u(U;x_0) &:= V_f(x_N) + \sum_{k=0}^{N-1} \ell(x_k, u_k), \\
        V_f(x) &:= \frac{1}{2} \norm{x - x^*}_P^2 - \nu^\top (x-x^*),\\
        G_u(U;x_0) &:= \PPPAR{ \ g_u(x_0,u_0) \ ; \ \cdots \ ;\ g_u(x_{N-1},u_{N-1}) \ },
    \end{align*}
    where $\nu$ is a vector satisfying $(S Z)^\top \nu = 0$\footnote{Let $z:=[x;u], \ g(z):=[g_u(x,u);g_l(x,u)]$, then $\nu$ corresponds to the Lagrange multiplier for the equality constraint in $\mathbf{P}_0^\prime: \min_z \left\{ \ell(z)\, \middle|\, g(z)\le 0, S z = 0 \right\}$, which is equivalent to \PROB{0}. 
    From its KKT condition, we have $\nabla_z \ell(z) + \nabla_z g(z) \mu + S^\top \nu = 0 \Rightarrow Z^\top (\nabla_z \ell(z) + \nabla_z g(z) \mu) + (SZ)^\top \nu = 0$ with the Lagrange multipliers $\mu, \nu$. 
    From the KKT condition of \eqref{eq:P0}, we also have $\nabla_\theta \ell(z(\theta)) + \nabla_\theta g(z(\theta)) \mu = Z^\top (\nabla_z \ell(z) + \nabla_z g(z) \mu) = 0$. 
    Therefore, we have $(SZ)^\top \nu = 0$.}.
    For simplicity, we again omit terminal constraints here. 
    However, since this setting establishes strict dissipativity\footnote{Define the storage function $\lambda(x) := \nu^\top(x-x^*)$, and let $\mu$ be the Lagrange multiplier for $g(z)\leq 0$ at the solution $z = z^* :=[x^*;u^*] $ of $\mathbf{P}_0^\prime$. Due to $\mu\geq 0, \, g(z^*)\leq 0$, and convexity of $g(z)$, $(\nabla_z g(z^*) \mu)^\top (z-z^*)\leq  \mu^\top g(z) \leq 0 \ \forall z \text{ s.t. } g(z)\leq 0$ holds. Considering it with $\nabla_z \ell(z^*) + \nabla_z g (z^*)\mu + S^\top \nu = 0$, the rotated stage cost $\tilde \ell(z) := \ell(z) -\ell(z^*) +\lambda(x) - \lambda(Ax+Bu)$ satisfies $\tilde \ell(z) = 1/2 (\norm{x-x^*}_Q^2 + \norm{u-u^*}_R^2) - (\nabla_z g(z^*) \mu)^\top (z-z^*) \geq 1/2 (\norm{x-x^*}_Q^2 + \norm{u-u^*}_R^2)$. It yields the strict dissipativity: $\lambda(Ax+Bu)-\lambda(x)\leq \tilde \ell(z)-\tilde \ell(z^*)-1/2(\norm{x-x^*}_Q^2 + \norm{u-u^*}_R^2)  \ \forall z \text{ s.t. } g(z)\leq 0$.} with respect to $(x,u) = (x^*,u^*)$, practical convergence to that point can be achieved under additional conditions such as recursive feasibility and the turnpike property \cite{7039809}.
    Although the additional assumptions have not been verified, we confirmed that it converged under broad initial conditions, including an example shown below.
         
    \begin{figure}[t!]
            \centering
            \includegraphics[clip,scale=0.8]{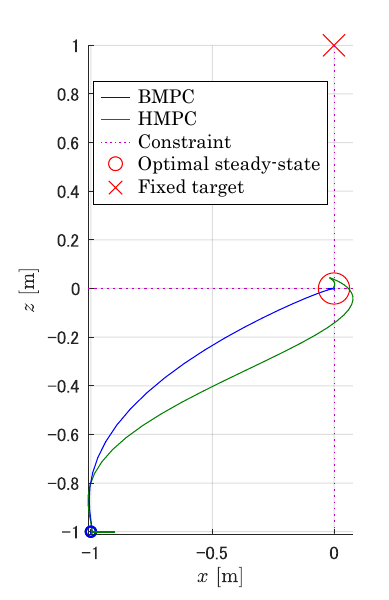}
            \caption{Controlled trajectories with HMPC \PROB{0} (green) and BMPC \PROB{2} (blue) for quadrotor example. Although the fixed target (cross) is infeasible, both controllers converge to the optimal steady state (circle); BMPC does so while satisfying the upper-level constraint (dotted lines), whereas HMPC violates it during the transient.\label{fig:quad_traj}}
            \includegraphics[clip,scale=0.8]{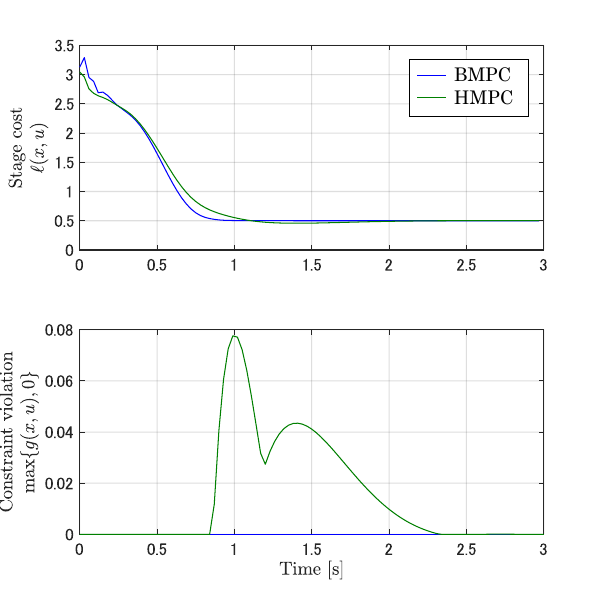}
            \caption{Performance measures of HMPC \PROB{0} (green) and BMPC \PROB{2} (blue) for the quadrotor example. BMPC avoids transient constraint violation and converges faster than HMPC, yielding slightly lower average stage cost.\label{fig:quad_cost}}
    \end{figure}
    
    \subsection{Closed-loop simulation}
    We examined the closed-loop performance of MPCs \PROB{0}--\PROB{3}.
    For two MPCs, HMPC \PROB{0} and the reduced bilevel MPC \PROB{2}, the controlled trajectory and the performance measures are compared in Figs.~\ref{fig:quad_traj} and~\ref{fig:quad_cost}, respectively.
    The results of bilevel MPC \PROB{1} (with zero initialization) and centralized MPC \PROB{3} are omitted, as they were equivalent to the result of \PROB{2}, except for very slight numerical errors.
    In these figures, the blue and green solid lines represent the results of bilevel MPC (BMPC) \PROB{2} and HMPC \PROB{0}, respectively; the purple broken lines show the upper bounds on the $x$ and $z$ positions; and the red circle and cross are the optimal steady-state position and the fixed target, respectively.

    In Fig.~\ref{fig:quad_traj}, we can see that the fixed target is outside the feasible region; the optimal steady position is the closest point to the target in the feasible region; and both BMPC and HMPC converged to the optimal position.
    However, since HMPC handles the upper-level constraint (position bounds) only for the steady-state, it violates the bounds in the transient phase, as seen in Fig.~\ref{fig:quad_cost} (from 0.8~s to 2.3~s).
    In contrast, BMPC satisfied the upper-level constraint in the transient phase.
    In addition, Fig.~\ref{fig:quad_cost} shows that BMPC converged to the point faster than HMPC, as seen in the steeper decrease of the stage cost around 0.7~s.
    The times at which they achieved stabilization around the optimum point without constraint violation are approximately 0.8~s for BMPC and 2.3~s for HMPC.
    The average stage costs were slightly better in BMPC, i.e., 0.892 for BMPC and 0.900 for HMPC, although BMPC must pay extra cost to avoid violating constraints.
    Regarding the upper-level cost function $\ell(x,u)$, HMPC minimizes it only for the steady-state.
    In contrast, BMPC directly optimizes $\ell(x,u)$ as a stage cost in the horizon, yielding the cost advantage in general.

    The average computation times for each MPC are (\PROB{0}) 17.7~ms, (\PROB{1}) 14300~ms, (\PROB{2}) 54.4~ms, and (\PROB{3}) 38.3~ms, where \PROB{0}--\PROB{2} include the computation time for the lower-level MPC.
    This shows that the computational burden for \PROB{2} is about the sum of those for \PROB{0} and \PROB{3}; corresponding to the fact that the size of the upper-level problems in \PROB{0} is significantly smaller compared to the lower-level problems, and the complexity of the upper-level problems in \PROB{2} is nearly equivalent to that of \PROB{3}.
    Since the performance of \PROB{1}--\PROB{3} is equivalent, the most computationally efficient choice is naturally \PROB{3}, when hierarchical control is not required. 
    When hierarchical control is desirable, BMPC \PROB{1} incurs an unrealistic computational cost, necessitating the selection of HMPC \PROB{0} in most past cases.
    The above results demonstrated that using the proposed MPC \PROB{2} allows BMPC to be implemented with a sufficiently fast computation while achieving equivalent performance to \PROB{1} and better reliability than \PROB{1}.

    \subsection{Performance degradation and its bounds}
    Next, we also consider the move-blocked problems \PROB{1}($M$) and \PROB{2}($M$), and verify the results on performance ordering and bounds via single optimization results at the initial condition $x_0=0$ of the above problem.
    We again used move-blocking matrices $M_i$ \eqref{eq:Mi} that make the first $i$ steps in the horizon free and the last $(N-i)$ steps blocked.
    We also prepared a low-rank move-blocking matrix $M^\star$ by generating 50 initial states that have different positions with $x_0$ and giving it to Algorithm~\ref{alg:construct_M_from_P0}.
    The obtained $M^\star$ had $\rank (M^\star)=16$ that was equivalent to $\rank (M_{4})$.
    
    \begin{figure}[t]
            \centering
            \includegraphics[clip,scale=0.8,trim=20 0 0 0]{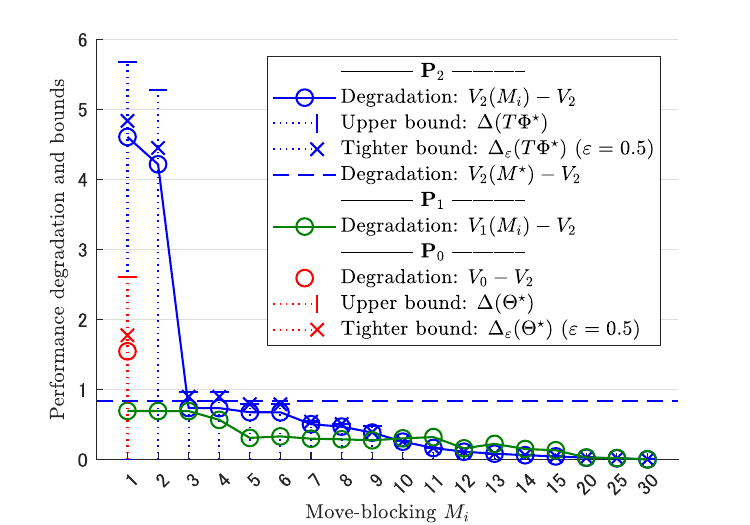}
            \caption{Performance degradation relative to \PROB{2} for move-blocked problems (blue circle: \PROB{2}($M$), green circle: \PROB{1}($M$)) and HMPC \PROB{0} (red circle) in the quadrotor example, together with the proposed upper bounds; the overall trends are similar to those in Fig.~\ref{fig:toyMB}. In addition, the results for \PROB{1}($M$) do not always satisfy the expected monotonicity $V_1(M_{i+1})\leq V_1(M_i)$ or the dominance relation $V_1(M_i)\leq V_2(M_i)$, reflecting the nonconvexity of \PROB{1}($M$) and further motivating the use of \PROB{2}($M$) for reliable control design. \label{fig:MB_and_bounds}}
    \end{figure}  
    Fig.~\ref{fig:MB_and_bounds} shows the degradation ratio of the performance of \PROB{0}, \PROB{1}($M_i$) and \PROB{2}($M_i$) ($i=1,..,30$) compared to \PROB{2}.
    The blue, magenta, and red bullets represent the actual performance degradations $V_2(M_i)-V_2,\, V_1(M_i)-V_2 \ (i=1,...,30)$ and $V_0-V_2$, respectively.
    The bars represent the lower and upper bounds, i.e., the blue and red ones show $0\leq V_2(M_i)-V_2\leq \Delta(T\Phi^\star)$, $0\leq V_0 - V_2\leq \Delta(\Theta^\star)$, respectively.
    The cross markers represent the tighter bounds, i.e., the blue and red ones show $\Delta_\ep (T\Phi^\star)$ and $\Delta_\ep (\Theta^\star)$ with $\ep = 0.5$.
    The green line represents $V_2(M^\star) -V_2$.

    Since the initial condition $x_0=0$ is far from the border of the upper-level position constraint, the result of \PROB{0} satisfies $G_u(U_0;x_0)\leq 0$.
    The figure shows $V_0 \geq V_1(M_1)$ as anticipated from Theorem~\ref{thm:ordering_MB}-3, under $G_u(U_0;x_0)\leq 0$. 
    In contrast, $V_0 \geq V_2(M_1)$ did not hold in this case because some of the lower-level constraints were active; therefore, $\nabla_U F_l(U_0;\Theta_0,x_0)\neq 0$.
    However, the result shows that \PROB{2}($M$) recovers the performance with small refinement of $M$, i.e., $V_0 \geq V_2(M_i) \ (i\geq3)$ and $V_0 \geq V_2(M^\star)$ hold.
    It is notable that $V_2(M_i)-V_2$ decreases drastically with $i= 3$ and the proposed bounds adequately estimate that the performance degradation is small with $i\geq 3$.
    Therefore, when we consider the workflow to decide $M_i$ by increasing $i$ in order, we will find a well-balanced $M_i$ at $i=3$ by checking the proposed bounds.
    The pre-determination of $M^\star$ is also effective in achieving a good balance without iterative computation of \PROB{2}($M_i$).
    The result also shows that the monotonicity $V_2(M_i) \geq V_2(M_{i+1})$ holds for any $i \in [1,N-1]$ as anticipated in Theorem~\ref{thm:ordering_MB}-2, however, $V_1(M_i) \geq V_1(M_{i+1})$ and $V_2(M_i) \geq V_1(M_i)$ did not hold for some $i$, e.g., the figure shows $V_1(M_{10})\leq V_1(M_{11}), \ V_2(M_{11})\leq V_1(M_{11})$, etc.
    These inconsistencies occurred due to the nonconvexity of \PROB{1}($M$), which motivates us to employ \PROB{2}($M$) for reliable control design.   

\end{document}